\newcommand{\emmedparagraph}[1]{{\medskip\noindent\emph{#1}}}
\newcommand{\etal}{{\em et al.}}
\newcommand{\mycase}[1]{\mbox{{\underline{Case #1}}:\/}}
\newcommand{\ignore}[1]{{}}
\def\remark@name{\relax}
\def\named@remark[#1]#2{\def\remark@name{#1}\unnamed@remark{#2}}
\def\unnamed@remark#1{\marginpar[\hfill$\longrightarrow${\color{red}\sf\scriptsize#1}~{\color{blue}\remark@name}]{$\longleftarrow${\color{red}\sf\scriptsize#1~{\color{blue}\remark@name}}}}
\def\remark{\@ifnextchar[{\named@remark}{\unnamed@remark}}
\newcommand{\jstar}{{\ifmmode{j^\ast}\else{$j^\ast$}\fi}}
\newcommand{\tstar}{{\ifmmode{t^\ast}\else{$t^\ast$}\fi}}
\newcommand{\Rstar}{{\ifmmode{R^\ast}\else{$R^\ast$}\fi}}
\newcommand{\barB}{{\bar B}}
\newcommand{\calC}{{\cal C}}
\newcommand{\calG}{{\cal G}}
\newcommand{\calP}{{\cal P}}
\newcommand{\calS}{{\cal S}}
\newcommand{\calT}{{\cal T}}
\newcommand{\braced}[1]{{ \left\{ #1 \right\} }}
\newcommand{\floor}[1]{{\lfloor #1\rfloor}}
\newcommand{\NP}{{\mbox{\sf NP}}}
\newcommand{\APX}{{\mbox{\sf APX}}}
\newcommand{\MaxCC}{{\mbox{\sc MaxCC}}}
\newcommand{\MinCC}{{\mbox{\sc MinCC}}}
\newcommand{\singleton}{{\mbox{\it singleton}}}
\newcommand{\merge}{{\mbox{\it merge}}}
\newcommand{\Greedy}{{\sc Greedy}}
\newcommand{\OPT}{{\sf OPT}}
\newcommand{\smallOPT}{{\scriptstyle\sf OPT}}
\newcommand{\OCC}{{\sf OCC}}
\newcommand{\opt}{\mbox{\sf{O}}}
\newcommand{\alg}{\mbox{\sf{S}}}
\newcommand{\deltaalg}{\Delta}
\newcommand{\profit}{\mbox{\sf{profit}}}
\newcommand{\optprof}{\mbox{\sf{O}}}
\newcommand{\algprof}{\mbox{\sf{S}}}
\newcommand{\cost}{\mbox{\sf{cost}}}
\newcommand{\profvalue}[1]{{\binom{#1}{2}}}
\newcommand{\LL}{{\mbox{\tiny\rm L}}}
\newcommand{\RR}{{\mbox{\tiny\rm R}}}
\newcommand{\DD}{{\mbox{\tiny\rm D}}}
\newcommand{\rmL}{{\mbox{\rm L}}}
\newcommand{\rmR}{{\mbox{\rm R}}}
\newcommand{\KK}{\ensuremath{{\bf K}}}
\newcommand{\half}{{\textstyle\frac{1}{2}}}
\newcommand{\threehalves}{{\textstyle\frac{3}{2}}}
\newcommand{\onethird}{{\textstyle\frac{1}{3}}}
\newcommand{\twothirds}{{\textstyle\frac{2}{3}}}
\newcommand{\threefifths}{{\textstyle\frac{3}{5}}}
\newcommand{\onesixth}{{\textstyle\frac{1}{6}}}
\newcommand{\defeq}{\stackrel{\text{def}}{=}}
\newcommand{\GDY}{{\sf GDY}}
\newcommand{\R}{{\sf R}}
\begin{document}

\title{Online Clique Clustering\footnote{This paper combines and extends the results from conference
	publications~\cite{FabijanNP_clique_clustering_2013,Chrobak_etal_15}}}

\author{Marek Chrobak\thanks{Department of Computer Science and Engineering, University of California at Riverside, USA. Research supported by NSF grants CCF-1536026, 
CCF-0729071 and CCF-1217314. email:~{\tt marek@cs.ucr.edu} }
\and
Christoph D\"urr\thanks{Sorbonne Universit\'es, UPMC Univ Paris 06, UMR 7606, LIP6, Paris, France and CNRS, UMR 7606, LIP6, Paris, France. email:~{\tt cristoph.durr@lip6.fr} }
\and
Aleksander Fabijan\thanks{Department of Computer Science, Malm\"o University, SE-205~06~~Malm\"o, Sweden. email:~{\tt \{aleksander.fabijan, bengt.nilsson.TS\}@mah.se} }
\and
Bengt J.~Nilsson\footnotemark[4]
}


\maketitle

\begin{abstract}
Clique clustering is the problem of partitioning the vertices of
a graph into disjoint clusters, where each cluster forms a clique in the graph, while
optimizing some objective function. 
In online clustering, the input graph is given one vertex at a time, and any vertices that have previously been clustered together are not allowed to be separated. 
The goal is to maintain a clustering with an objective value close to the optimal solution.  For the variant where we want to maximize the number of edges in the clusters, we propose an online strategy based on the doubling technique.  It has an asymptotic competitive ratio at most 
$15.646$ and an absolute competitive ratio at most $22.641$. We also show that no deterministic strategy can have an asymptotic competitive ratio better than~$6$. 
For the variant where we want to minimize the number of edges between clusters, we show that the deterministic  competitive ratio of the problem is $n-\omega(1)$, where $n$ is the number of vertices in the graph.
\end{abstract}


\section{Introduction}
\label{sec: introduction}


The correlation clustering problem and its different variants have been extensively studied over the past decades; 
see e.g.~\cite{BansalBC04_correlation_2004,charikar2003clustering,DemaineImorlica03}. The instance of
correlation clustering consists of a graph whose vertices represent some objects and edges represent their
similarity.
Several objective functions are used in the literature, e.g., maximizing the number of edges within the clusters 
plus the number of non-edges between clusters (maximizing agreements), or minimizing the number of non-edges inside the clusters 
plus the number of edges outside them (minimizing disagreements). 
Bansal~{\etal}~\cite{BansalBC04_correlation_2004} show that both the minimization 
of disagreement edges and the maximization of agreement edges versions are 
$\NP$-hard. However, from the point of view of approximation the two versions differ. 
In the case of maximizing agreements this problem actually admits a PTAS, whereas in the case of minimizing 
disagreements it is $\APX$-hard. Several efficient constant factor approximation algorithms are proposed 
for minimizing 
disagreements~\cite{BansalBC04_correlation_2004,charikar2003clustering,DemaineImorlica03} and 
maximizing agreements~\cite{charikar2003clustering}. 

Some correlation clustering problems may impose additional restrictions on the structure
or size of the clusters. We study the variant, called \emph{clique clustering}, where
the clusters are required to form disjoint cliques in the underlying graph $G = (V,E)$.
Here, we can maximize the number of edges inside the clusters
 or minimize the number of edges outside the clusters. These measures give rise to the maximum and minimum clique clustering problems respectively. The computational complexity and approximability of these problems have attracted attention recently~\cite{Dessmark_etal_06,FigueroaGJKLP_approximate_clustering_2008,ShamirST_graph_modification_2004}, 
and they have numerous applications within the areas of gene expression profiling and DNA clone classification~\cite{Ben-DorSY_gene_expression_1999,FigueroaBJ_fingerprings_2004,ShamirST_graph_modification_2004,Valinsky_etal_bacterial_2002}. 

We focus on the online variant of clique clustering, where the input graph $G$ is not known in advance. (See~\cite{Borodin_ElYaniv_98} for more background on online problems.) The vertices of $G$ arrive one at a time. Let $v_t$ denote the vertex that arrives at time $t$, for $t = 1,2,\ldots$. When $v_t$ arrives, its edges to all preceding vertices $v_1,\ldots,v_{t-1}$ are revealed as well. In other words, after step $t$, the subgraph of $G$ induced by $v_1,v_2,\ldots,v_t$ is known, but no other information about $G$ is available. In fact, we assume that even the number $n$ of vertices
is not known upfront, and it is revealed only when the process terminates after step $t=n$.

Our objective is to construct a procedure that incrementally constructs and outputs a clustering based on the information acquired so far. Specifically, when $v_t$ arrives at step $t$, the procedure first creates a singleton clique $\braced{v_t}$. Then it is allowed to merge any number of cliques (possibly none) in its current partitioning into larger cliques. No other modifications of the clustering are allowed. The merge operation in this online setting is irreversible; once vertices are clustered together, they will remain so, and hence, a bad decision may
have significant impact on the final solution. This online model was proposed by Charikar~{\etal}~\cite{CharikarCFM_incremental_2004}.

We avoid using the word ``algorithm'' for our procedure, since it evokes connotations with computational limits in terms of complexity. In fact, we place no limits on the computational power of our procedure and, to emphasize this, we use the word \emph{strategy} rather than \emph{algorithm}.
This approach allows us to focus specifically on the limits posed by the lack of complete information about
the input. Similar settings have been studied in previous work on online computation, for example for online medians~\cite{Chrobak_Hurand_medians_11,Chrobak_KNY_bidding_08,Lin_NRW_incremental_10}, 
minimum-latency tours~\cite{Chaudhuri_GRT_tours_03}, and several other online optimization problems~\cite{Chrobak_Mathieu_doubling_2006},
where strategies with unlimited computational power were studied.


\paragraph{Our results.} We investigate the online clique clustering problem and provide upper and lower bounds for
the competitive ratios for its maximization and minimization versions, that we denote $\MaxCC$ and $\MinCC$, respectively. 

Section~\ref{sec: maxclustering} is devoted to the study of $\MaxCC$. We first observe that the competitive ratio of
the natural greedy strategy is linear in $n$. 
We then give a constant competitive strategy for $\MaxCC$ with asymptotic competitive ratio at most
$15.646$ and absolute competitive ratio at most $22.641$. The strategy is based on the doubling technique often used in online
algorithms. We show that the doubling approach cannot give a competitive ratio smaller than $10.927$.
We also give a general lower bound, proving
that there is no online strategy for $\MaxCC$ with competitive ratio smaller than~$6$.
Both these lower bounds apply also to asymptotic ratios.

In Section~\ref{sec: minclustering} we study online strategies for $\MinCC$.
We prove that no online strategy can have a competitive ratio of $n-\omega(1)$. We then
show that the competitive ratio of the greedy strategy is $n-2$, matching this
lower bound.


\section{Preliminaries}
\label{sec: preliminaries}


We begin with some notation and basic definitions of the $\MaxCC$ and $\MinCC$ clustering problems. 
They are defined on an input graph $G=(V,E)$, with vertex set $V$ and edge set~$E$.
We wish to find a partitioning of the vertices in $V$ into clusters so that each cluster induces a clique 
in  $G$. In addition, we want to optimize some objective function associated with the clustering. 
In the $\MaxCC$ case, this objective 
is to maximize the total number of edges inside the clusters, whereas in the $\MinCC$ case, 
we want to minimize the number of edges outside the clusters.

We will use the online model, proposed by Charikar {\em et~al.}~\cite{CharikarCFM_incremental_2004}, 
and Mathieu {\etal}~\cite{MathieuSS_correlation_2010} for the online correlation clustering problem. 
Vertices (with their edges to previous vertices) arrive one 
at a time and must be clustered as soon as they arrive. 
Throughout the paper we will implicitly assume that any graph $G$ has its vertices ordered $v_1,\ldots,v_n$, 
according to the ordering in which they arrive on input.
The only two operations allowed are:
$\singleton(v_t)$, that creates a singleton cluster containing the single vertex $v_t$, and 
$\merge(C,C')$, which merges two existing clusters $C,C'$ into one, under the assumption that the resulting 
cluster induces a clique in $G$. This means that once two vertices are clustered together, 
they cannot be later separated.

For $\MaxCC$, we define the \emph{profit} of a clustering $\calC= \braced{C_1,\ldots,C_k}$ on a 
given graph $G=(V,E)$ to be the total number of edges in these cliques, that is 
$\sum_{i=1}^k \binom{|C_i|}{2}=\half \sum_{i=1}^k |C_i|(|C_i|-1)$.
Similarly, for $\MinCC$, we define the \emph{cost} of $\calC$ to be the total number of edges outside the cliques,
that is $|E|-\sum_{i=1}^k\binom{|C_i|}{2}$.
For a graph $G$, we denote the optimal profit or cost for $\MaxCC$ and $\MinCC$, respectively, by
$\profit_\smallOPT(G)$ and $\cost_\smallOPT(G)$.

It is common to measure the performance of an online strategy by its \emph{competitive ratio}. This ratio is defined as the worst case ratio between the profit/cost of the online strategy and the profit/cost of an offline optimal strategy, one that knows the complete input sequence in advance.  
More formally, for an online strategy $\calS$,  we define $\profit_{\calS}(G)$ to be the profit of $\calS$
when the input graph is~$G=(V,E)$ and, similarly, let $\cost_{\calS}(G)\defeq|E|-\profit_{\calS}(G)$ be the cost of $\calS$ on~$G$.

We say that an online strategy $\calS$ is \emph{$R$-competitive} for $\MaxCC$, if there is a constant $\beta$ such that
for any input graph $G$ we have
\begin{equation}
	 R\cdot\profit_{\calS}(G) + \beta \ge \profit_\smallOPT(G).
		\label{eqn: max competitive ratio}
\end{equation}
Similarly $\calS$ is \emph{$R$-competitive} for $\MinCC$, if there is a constant $\beta$ such that
for any input graph $G$ we have
\begin{equation}
	 \cost_{\calS}(G) \le R\cdot\cost_\smallOPT(G) + \beta.
		\label{eqn: min competitive ratio}
\end{equation}
The reason for defining the competitive ratio differently for maximization and minimization problems is to have all ratios 
being at least $1$.
The smallest $R$ for which a strategy $\calS$ is $R$-competitive is called the
(asymptotic) \emph{competitive ratio} of $\calS$.
The smallest $R$ for which $\calS$ is $R$-competitive with $\beta = 0$ is called the
\emph{absolute competitive ratio} of~$\calS$.
(If it so happens that these minimum values do not exist, in both cases the competitive ratio is 
actually defined by the corresponding infimum.)

Note that an online strategy does not know when the last vertex arrives and, as a consequence,
in order to be $R$-competitive, it needs to ensure that the 
corresponding bound, (\ref{eqn: max competitive ratio}) or (\ref{eqn: min competitive ratio}),
is valid after each step.


\section{Online Maximum Clique Clustering}
\label{sec: maxclustering}

In this section we study online $\MaxCC$, the clique clustering problem where
the objective is to maximize the number of edges within the cliques.
The main results here are upper and lower bounds for the competitive
ratio. For the upper bound, we give a strategy that uses a doubling technique to
achieve a competitive ratio of at most $15.646$. For the lower bound,
we show that no online strategy has a competitive ratio smaller than $6$.
Additional results include a competitive analysis of the greedy strategy and
a lower bound for doubling based strategies.


\subsection{The Greedy Strategy for Online $\MaxCC$}
\label{subsec: greedy_algorithm}



{\Greedy}, the \emph{greedy} strategy for $\MaxCC$, merges each input vertex with the largest current cluster that maintains the clique property.
This maximizes the increase in profit at this step.
If no such merging is possible the vertex remains in its singleton cluster. Greedy strategies are
commonly used as heuristics for a variety of online problems and can be shown to behave well for certain of them; e.g.~\cite{MathieuSS_correlation_2010}. We show that the solution of {\Greedy} can be far from optimal for $\MaxCC$.

For $n = 1,2,3$, {\Greedy} always finds an optimal clustering; see Figure~\ref{fig:greedysmall}, where all cases are shown. Therefore throughout the rest of this section
we will be assuming that $n\ge 4$.

\begin{figure}[htb]
\begin{center}
\includegraphics{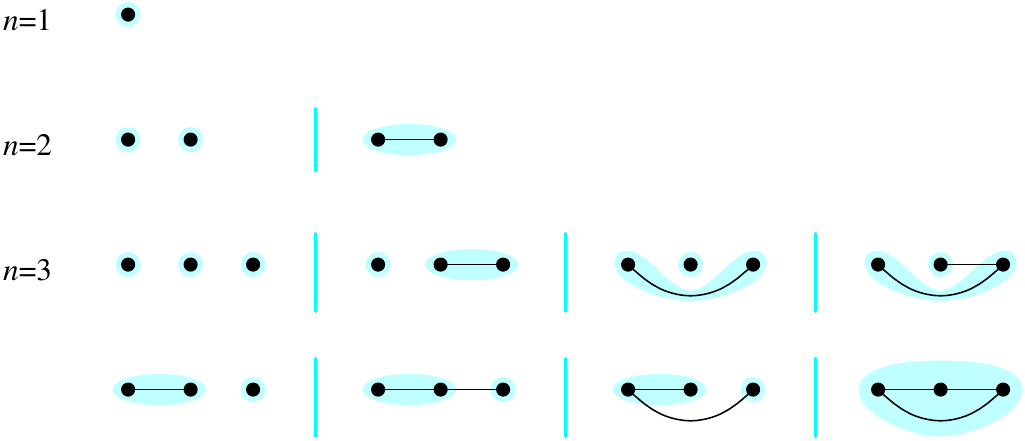}
\caption{\label{fig:greedysmall}Greedy finds optimal clusterings (blue) for $n = 1,2,3$. Vertices are released in order from left to right.}
\end{center}
\end{figure}


\begin{theorem}\label{thm:maxgreedylowerbound}
{\Greedy} has competitive ratio at least~$\floor{n/2}$ for $\MaxCC$.
\end{theorem}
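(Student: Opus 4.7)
\medskip\noindent\emph{Proof plan.}
The plan is to exhibit, for every $n \ge 4$, an $n$-vertex input on which the profit of {\Greedy} falls short of the offline optimum by a factor of $\lfloor n/2 \rfloor$. Writing $k = \lfloor n/2 \rfloor$, I would build a graph on $2k$ vertices (appending a last isolated vertex if $n$ is odd, which changes nothing) and split its vertex set into $A = \braced{a_1,\ldots,a_k}$ and $B = \braced{b_1,\ldots,b_k}$. Each of $A$ and $B$ is made into a clique, and the matching edges $a_i b_i$ are included for $i = 1,\ldots,k-1$ only. Crucially, $a_k b_k$ is \emph{not} an edge, and no cross edges $a_i b_j$ with $i \ne j$ are present. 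The vertices are revealed in the interleaved order $a_1, b_1, a_2, b_2, \ldots, a_k, b_k$.

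The core step is to trace {\Greedy} on this input and show that its final clustering is
\[
\braced{a_1, b_1},\; \braced{a_2, b_2},\; \ldots,\; \braced{a_{k-1}, b_{k-1}},\; \braced{a_k},\; \braced{b_k},
\]
of profit exactly $k-1$. The driving observation is the invariant that, once a pair $\braced{a_j, b_j}$ is formed, no later vertex can be absorbed into it: a later $a_i$ fails because $a_i \not\sim b_j$, and a later $b_i$ fails because $b_i \not\sim a_j$ (the cross edges form only the partial matching). Arguing by induction on $i$, I would show that when $a_i$ arrives with $i \ge 2$, every cluster containing a neighbor of $a_i$ is a frozen pair, so $a_i$ remains a singleton; when $b_i$ arrives with $1 \le i < k$, the unique admissible (and hence largest admissible) cluster is the just-created singleton $\braced{a_i}$, so {\Greedy} merges to produce $\braced{a_i, b_i}$; and when $b_k$ finally arrives, no cluster is admissible at all, because $b_k \not\sim a_k$ and $b_k \not\sim a_j$ for every $j < k$.

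For the offline side, the partition $\braced{A, B}$ is a valid clustering of profit $2\binom{k}{2} = k(k-1)$. Its optimality is justified by noting that $G$ contains no triangle having both an $a$-vertex and a $b$-vertex, so every clique of $G$ is a subset of $A$, a subset of $B$, or a single matching edge; a short calculation then shows that every alternative partition is obtained from $\braced{A, B}$ by pulling some matching edges out of the two cliques, and each such operation strictly reduces the profit when $k \ge 2$. Putting the estimates together, the ratio of the offline optimum to {\Greedy}'s profit on this instance equals $k(k-1)/(k-1) = k = \lfloor n/2 \rfloor$, proving the claim.

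The main obstacle is the bookkeeping in the {\Greedy} trace: one has to verify that the ``largest admissible cluster'' rule, combined with the matching structure of the cross edges, really does leave every pair $\braced{a_j, b_j}$ permanently inactive after it is formed, since that is exactly the property preventing {\Greedy} from recovering through any later merge. Once this invariant is nailed down, both the OPT estimate and the final ratio computation are essentially immediate.
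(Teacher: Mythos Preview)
Your proposal is correct and essentially identical to the paper's proof: the paper uses odd-indexed and even-indexed vertices as the two cliques with matching edges $(2i-1,2i)$ for $i=1,\ldots,\lfloor(n-1)/2\rfloor$, which is exactly your $A$/$B$ construction under the relabeling $a_i\mapsto 2i-1$, $b_i\mapsto 2i$. The only cosmetic differences are that the paper handles odd $n$ by making the two cliques of sizes $\lceil n/2\rceil$ and $\lfloor n/2\rfloor$ rather than appending an isolated vertex, and it adds one sentence observing that the absolute-ratio bound immediately yields the same asymptotic bound (which you may want to include as well).
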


\begin{proof}
We first give the proof for the absolute ratio, and then extend it to the asymptotic ratio.

Consider an adversary that provides input to the strategy to make it behave as badly as possible.
Our adversary creates an instance with $n$ vertices, numbered from $1$ to $n$. The odd vertices are connected to form a clique,
and similarly the even vertices are connected to form a clique. In addition each vertex of the form $2i$,
for $i=1,\ldots,\floor{(n-1)/2}$, is connected to vertex $2i-1$; see Figure~\ref{fig:greedyex}.

\begin{figure}[htb]
\begin{center}
\includegraphics{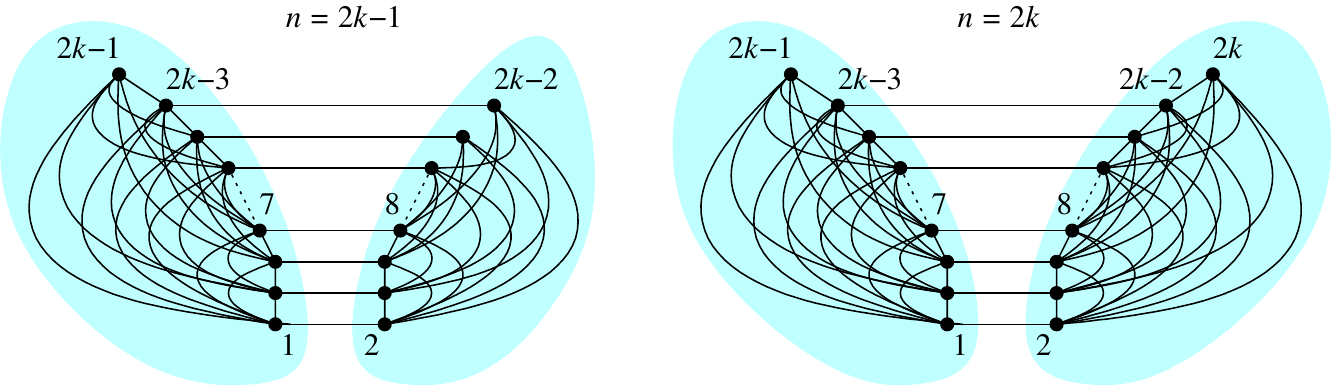}
\caption{\label{fig:greedyex}Illustrating the proof of Theorem~\protect\ref{thm:maxgreedylowerbound} for odd and even~$n$.}
\end{center}
\end{figure}

\Greedy\ clusters the vertices as odd/even pairs, leaving the vertex $2k-1$ as a singleton,
if $n=2k-1$ is odd and leaving both vertices $2k-1$ and $2k$ as singletons, if $n=2k$ is even.
This generates a clustering of profit $\profit_{\GDY}(G)=k-1$. An optimal strategy clusters
the odd vertices in one clique of size $k$ and the even vertices in another clique of size $k-1$ or $k$,
depending on whether $n$ is odd or even. The profit for the optimal solution
is $\profit_{\smallOPT}(G)= (k-1)^2$, if $n$ is odd and $\profit_{\smallOPT}(G)=k(k-1)$,
if $n$ is even. Hence, the ratio between the optimum and the greedy solution
is $k-1=(n-1)/2=\floor{n/2}$, if $n$ is odd, and $k=n/2=\floor{n/2}$, if $n$ is even; hence the worst case
absolute competitive ratio  of the greedy strategy is at least~$\floor{n/2}$.

To obtain the same lower bound on the asymptotic ratio, it suffices to notice that,
if we follow the above adversary strategy, then for any $R < \floor{n/2}$ and any constant $\beta> 0$, we can
find sufficiently large $n$ for which inequality~(\ref{eqn: max competitive ratio}) will be false.
\end{proof}

Next, we look at the upper bound for the greedy strategy.


\begin{theorem}\label{thm:maxgreedyupperbound}
{\Greedy}'s absolute competitive ratio for $\MaxCC$ is at most~$\floor{n/2}$.
\end{theorem}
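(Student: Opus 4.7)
The plan is to establish the bound $\profit_{\smallOPT}(G) \le \floor{n/2} \cdot \profit_{\GDY}(G)$ by combining an upper bound on OPT's profit with a matching lower bound on Greedy's profit, both expressed in terms of the size $s_1$ of a maximum clique in $G$. Since every OPT cluster induces a clique in $G$, all OPT cluster sizes are at most $s_1$, which makes $s_1$ the natural common parameter bridging the two sides.

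For the OPT upper bound, convexity gives $2\,\profit_{\smallOPT}(G) = \sum_j s_j(s_j-1) \le (s_1-1)\sum_j s_j = n(s_1-1)$, hence $\profit_{\smallOPT}(G) \le n(s_1-1)/2$. A sharpened version for odd $n$ follows by maximising $\sum \binom{s_j}{2}$ over compositions of $n$ with parts at most $s_1$: it yields $\profit_{\smallOPT}(G) \le \floor{n/2}(s_1-1)$ whenever $s_1 \nmid n$. The core lemma is the matching lower bound $\profit_{\GDY}(G) \ge s_1 - 1$. To prove it, fix a maximum clique $K = \{k_1,\ldots,k_{s_1}\}$ in arrival order, and for each $i = 2, \ldots, s_1$ identify a distinct Greedy merge that contributes at least one unit of profit: when Greedy merges $k_i$ into a non-empty cluster, credit $k_i$'s own merge; otherwise the Greedy cluster containing $k_{i-1}$ at time $k_i$ must contain a non-$K$ ``blocker'' vertex $w$ (witnessing that this cluster fails to be a candidate for $k_i$), and one credits the earlier merge that brought $w$ into that cluster. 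Combining the two bounds gives $\profit_{\smallOPT}(G)/\profit_{\GDY}(G) \le n/2 = \floor{n/2}$ for even $n$, and $\le \floor{n/2}$ for odd $n$ with $s_1 \nmid n$. The degenerate cases are handled separately: $s_1 = n$ means $G$ is a clique so Greedy is optimal, and $\profit_{\GDY}(G) = 0$ forces $|E|=0$ by a straightforward induction over arrivals (any edge would have triggered a merge), so $\profit_{\smallOPT}(G) = 0$ as well.

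The main obstacle is the distinctness of the credited merges in the proof of $\profit_{\GDY}(G) \ge s_1 - 1$: a single non-$K$ vertex could in principle serve as blocker for two different indices, and a naive charging would double-count one merge. Resolving this requires the observation that a shared blocker forces the corresponding $K$-vertices to reside in the same Greedy cluster at the relevant times, which in turn implies that one of those $K$-vertices itself joined a non-empty cluster with positive profit --- a contribution that can be re-routed to restore distinctness of the credits. A careful bookkeeping of these re-routings, guided by arrival order and cluster-formation history, is the technical heart of the argument. For the residual case of odd $n$ with $s_1 \mid n$ and $s_1 < n$, the same charging scheme applied separately to each OPT cluster yields the quantitatively stronger lower bound $\profit_{\GDY}(G) \ge (n/s_1)(s_1-1)$ needed to close the remaining gap.
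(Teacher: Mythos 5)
Your overall strategy---parameterise both sides by the maximum clique size $s_1$, bound $\OPT$ from above via convexity, and bound $\Greedy$ from below via a charging scheme on a fixed maximum clique---is genuinely different from the paper, which instead parameterises by the size $k$ of the largest \emph{optimal} cluster and splits into cases $k\le\floor{n/2}$ and $k>\floor{n/2}$. Unfortunately your proposal has two problems, one serious and one fatal.

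The serious problem is that your core lemma $\profit_{\GDY}(G)\ge s_1-1$ is stated but not actually proved. You correctly identify that a single non-$K$ blocker vertex can serve two different indices $i$, that $k_{i-1}$'s own merge can coincide with the merge you want to credit via the blocker, and that these collisions would break the distinctness of the credits; but you then say this is ``the technical heart of the argument'' and defer its resolution to unspecified ``careful bookkeeping.'' That is exactly where the work is, so the lemma must be regarded as unestablished. (I do believe the lemma is true, and the worst-case example in the paper's Theorem~\ref{thm:maxgreedylowerbound} shows it is tight, but a complete argument has to handle, for instance, the case where a final Greedy cluster is precisely $\{k_i,k_{i'}\}$ with no non-$K$ vertex.)

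The fatal problem is the residual case $n$ odd, $s_1\mid n$, $s_1<n$. There you need the much stronger bound $\profit_{\GDY}(G)\ge (n/s_1)(s_1-1)$, obtained by ``applying the same charging scheme separately to each OPT cluster.'' This bound is simply false. Take $n=9$, $s_1=3$: nine vertices forming three disjoint triangles $\{a_1,a_2,a_3\}$, $\{b_1,b_2,b_3\}$, $\{c_1,c_2,c_3\}$, with three extra edges $a_1b_1$, $a_2c_2$, $b_2c_3$ (one checks $s_1=3$ is preserved). Release the vertices in the order $a_1,b_1,a_2,c_2,b_2,c_3,a_3,b_3,c_1$. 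Then Greedy forms exactly the three cross-edge pairs $\{a_1,b_1\}$, $\{a_2,c_2\}$, $\{b_2,c_3\}$ and leaves $a_3,b_3,c_1$ as singletons, for a profit of $3$, whereas your claimed bound demands $\profit_{\GDY}\ge (9/3)\cdot 2 = 6$. (The theorem itself is safe here: $\OPT=9$ and the ratio is $3\le\floor{9/2}=4$, but your chain of inequalities cannot reach that conclusion.) The underlying issue is that charging per OPT cluster cannot be made additive: distinct OPT clusters are intertwined with the same Greedy merges, so the credits double-count. The paper's ``charged profit'' argument in Case~1 avoids this by charging each vertex in a Greedy cluster of size $c$ only $\half(c-1)$, which yields the weaker per-cluster bound $\ge\half(c_i-1)$ and then bounds each OPT cluster's \emph{ratio} individually rather than summing lower bounds on Greedy's profit. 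You will need a different idea for the $s_1\mid n$, $n$ odd case---or a globally different case split, as in the paper.
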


\begin{proof}
As shown earlier, the theorem holds for $n=1,2,3$, so we can assume that $n\ge 4$.
Fix an optimal clustering on $G$ that we denote $\OPT(G)$.
Assume this clustering consists of $p$ non-singleton clusters
of sizes $c_1,\ldots,c_p$.
The profit of $\OPT(G)$ is $\profit_{\smallOPT}(G)= \half\sum_{i=1}^p c_i (c_i-1)$.
Let $k = \max_i c_i$ be the size of the maximum cluster of~$\OPT(G)$.

\medskip
\noindent
\mycase{1} $k \leq \floor{n/2}$. 
In this case, we can distribute the profit of each
cluster of \Greedy\ equally among the participating vertices; that is,
if a vertex belongs to a \Greedy\ cluster of size $c$, it will be assigned a profit of $\half(c-1)$.
We refer to this quantity as \emph{charged profit}.
We now note that at most one vertex in each cluster of $\OPT(G)$ can be a singleton cluster in
{\Greedy}'s clustering, since otherwise \Greedy\ would cluster any two such vertices together. 
This gives us that each
vertex in a non-singleton cluster of $\OPT(G)$, except possibly for one, has charged profit at least $\half$. So
the total profit charged to the vertices of an $\OPT(G)$ cluster of size $c_i$ is at least $\half(c_i-1)$.
Therefore the \emph{profit ratio} for this clique of $\OPT(G)$, namely the ratio between its optimal profit and
{\Greedy}'s charged profit, is at most
\begin{equation*}
\frac {\half c_i (c_i-1)} { \half (c_i - 1) } = c_i.
\end{equation*}
From this bound and the case assumption, all cliques of
$\OPT(G)$ have profit ratio at most $k\le \floor{n/2}$, so the competitive ratio is also at most $\floor{n/2}$.


\medskip
\noindent
\mycase{2} $k \ge \floor{n/2}+1$. In this case there is a unique
cluster $Q$ in $\OPT(G)$ of size $k$. 
The optimum profit is maximized if the graph has one other clique of size $n-k$, so
\begin{equation}
\profit_{\smallOPT}(G) \;\le\; \half k(k-1) + \half (n-k)(n-k-1)
					\;=\; \half (n^2 + 2k^2 - 2nk - n).
						\label{eqn: opt bound}
\end{equation}
We now consider two sub-cases.


\medskip
\noindent
\mycase{2.1} {\Greedy}'s profit is at least $k$. In this case, using~(\ref{eqn: opt bound})
and $k\ge \floor{n/2}+1 \ge \half (n+1)$, the competitive ratio is at most
\begin{equation*}
	\frac{\half (n^2 + 2k^2 - 2nk - n) } {k} \;\le\; \half (n-1) 
							\;\le\; \floor{n/2}.
\end{equation*}
where the first inequality follows from simple calculus. (The function
$(n^2 + 2k^2 - 2nk - n)- k(n-1)$ is non-positive for
$k = \half(n+1)$ and $k=n$ and its second derivative with respect to $k$
between these two values is positive.)


\medskip
\noindent
\mycase{2.2} {\Greedy}'s profit is at most $k-1$.
We show that in this case the profit of {\Greedy} is in fact \emph{equal to} $k-1$, and that
{\Greedy}'s clustering has a special form.

To prove this claim, consider those clusters of {\Greedy} that intersect $Q$. 
For $i\ge 1$ and $j\ge 0$, let $d_{ij}$ be the number of these clusters
that have $i$ vertices in $Q$ and $j$ outside $Q$. 
Note that at most one cluster of {\Greedy} can be wholly contained in $Q$, as
otherwise {\Greedy} would merge such clusters if it had more. 
Denote by $\alpha$ the size of
this cluster of {\Greedy} contained in $Q$ (if it exists; if not, let $\alpha = 0$).
Let also $\beta = d_{11}$ and 
\begin{equation*}
	\gamma \;=\; \sum_{\substack{i,j\geq1\\ i+j\geq 3}} \!\! id_{ij} \;=\; k - \alpha - \beta \;\geq\; 0,
\end{equation*}
where $k=\sum_{i\geq1,j\geq0} i d_{ij}$ counts the number of vertices in~$Q$.
The total profit of {\Greedy} is at least
\begin{align*}
 \half \!\! \sum_{i\ge 1, j \geq 0} \!\! (i+j)(i+j-1) d_{ij}
 		\; &=\;
		\half \alpha(\alpha-1) + \beta + \half \!\!\sum_{\substack{i,j\geq1 \\ i+j\geq 3}} \!\! (i+j)(i+j-1)d_{ij}
	\\ 
		&\ge\; 
		\half \alpha(\alpha-1) + \beta + \threehalves \!\!\sum_{\substack{i,j\geq1 \\ i+j\geq 3}} \!\! i d_{ij}
	\\ 
		& = \; \half \alpha(\alpha-1) + \beta + \threehalves \gamma
	\\
		&=\; k + \half \alpha(\alpha-3) + \half\gamma
		\;\ge\; k - 1 + \half\gamma.
\end{align*}
(The last inequality holds
because, for integer values of $\alpha$, the expression $\alpha (\alpha - 3)$ is minimized 
for $\alpha\in\braced{1,2}$.)
Combined with the case assumption that  {\Greedy}'s profit is at most $k-1$, 
we conclude that {\Greedy}'s profit is indeed equal $k-1$ and, in addition,
we have that $\gamma = 0$ and $\alpha\in\braced{1,2}$.

So, for $\alpha = 1$, {\Greedy}'s clustering consists of $k-1$ disjoint edges, each with 
exactly one endpoint in $Q$, plus a singleton vertex in $Q$. Thus $n\ge 2k-1$. 
As $k \ge \floor{n/2}+1$, this is possible only when $n = 2k-1$.
By~(\ref{eqn: opt bound}), the optimal profit in this case is at most $(k-1)^2$, so the ratio
is at most $k-1 = \floor{n/2}$.

For $\alpha = 2$, {\Greedy}'s clustering consists of $k-1$ edges, of which one is contained in $Q$ and 
the remaining ones have exactly one endpoint in $Q$. So $n \ge 2k-2$. If $n$ is odd, this and the bound
$k \ge \floor{n/2}+1$ would force $n = 2k-1$, in which case the argument from the paragraph above applies.
On the other hand, if $n$ is even, then these bounds will force $n = 2k-2$.
Then, by~(\ref{eqn: opt bound}), the optimal profit is $k^2-3k+3$, so the
competitive ratio is at most
$(k^2-3k+3)/(k-1) = k - 2 +1/(k-1) \leq k-1 = \floor{n/2}$, for~$k\ge 2$.
\end{proof}


\subsection{A Constant Competitive Strategy for $\MaxCC$}
\label{subsec: competitive_algorithm}



In this section, we give our competitive online strategy $\OCC$. Roughly, the strategy works
in phases. In each phase
we consider the ``batch'' of nodes that have not yet been clustered with other
nodes, compute an optimal clustering for this batch, and add these new clusters
to the strategy's clustering. The phases are defined so that the profit
for consecutive phases increases exponentially.

The overall idea can be thought of as an application of the 
``doubling'' strategy (see \cite{Chrobak_Mathieu_doubling_2006}, for example), but in our
case a subtle modification is required. Unlike other doubling approaches, in our
strategy the phases are not completely independent: the clustering computed in each
phase, in addition to the new nodes, needs to include the singleton nodes from
earlier phases as well. This is needed, because in our objective function singleton
clusters do not bring any profit. 

We remark that one could alternatively consider using profit value $k^2/2$ for a clique of size $k$,
which is a very close approximation to our function if $k$ is large. This would lead to a
simpler strategy and much simpler analysis. However, this function is a bad approximation when
the clustering involves many small cliques. This is, in fact, the most
challenging scenario in the analysis of our algorithm, and instances with this property are also
used in the lower bound proof.


\subsubsection{The Strategy~$\OCC$}

Formally, our method works as follows. Fix some constant parameter
$\gamma > 1$. The strategy works in phases, starting with phase $j=0$.
At any moment the clustering maintained by the strategy contains a set $U$ of \emph{singleton} clusters.  
During phase $j$,
each arriving vertex is added into $U$.  As soon as there is a clustering of $U$ of profit at least $\gamma^j$, 
the strategy clusters $U$ according to this clustering and
adds these new (non-singleton) clusters to its current clustering. (The vertices that form singleton clusters
remain in $U$.) Then phase $j+1$ starts.

Note that phase~$0$ ends as soon as one edge is revealed, since then it is possible for $\OCC$ to create 
a clustering with $\gamma^0=1$ edge. The last phase may not be complete; as a result all nodes
released in this phase will be clustered as singletons. Observe also that the strategy never
merges non-singleton cliques produced in different phases.


\subsubsection{Asymptotic Analysis of~$\OCC$}
\label{sec: analysis occ}

For the purpose of the analysis it is convenient to consider (without loss of generality) only
infinite ordered graphs $G$, whose vertices arrive one at a time in some order $v_1,v_2,\ldots$, and we
consider the ratios between the optimum profit and $\OCC$'s profit after each step.
Furthermore, to make sure that all phases are well-defined, we will assume that the optimum profit
for the whole graph $G$ is unbounded. Any finite instance can be converted into an infinite
instance with this property by appending to it an infinite sequence of disjoint edges, without
decreasing the worst-case profit ratio.

For a given instance (graph) $G$, define $\opt_j(G)$ to be the total profit of the adversary at the end of phase~$j$
in the $\OCC$'s computation on $G$.
Similarly, $\alg_j(G)$ denotes the total profit of Strategy~{\OCC} at the end of phase~$j$ (including the incremental
clustering produced in phase~$j$). During phase $0$ the graph is empty, and
at the end of phase $0$ it consists of only one edge, so $\alg_0(G) = \opt_0(G) = 1$. 
For any phase $j > 0$, the profit of $\OCC$ is equal to~$\alg_{j-1}(G)$ throughout the phase,
except right after the very last step, when new non-singleton clusters are created.
At the same time, the optimum profit can only increase. Thus the maximum ratio in
phase $j$ is at most $\opt_j(G)/\alg_{j-1}(G)$. We can then conclude that,
to estimate the competitive ratio of our strategy $\OCC$, it is sufficient to establish an asymptotic
upper bound on numbers $\R_j$, for $j=1,2,\ldots$, defined by
\begin{equation}
	\R_j \;=\; \max_G \frac{\opt_j(G)}{\alg_{j-1}(G)},
	\label{eqn: definition of Rj}
\end{equation}
where the maximum is taken over all infinite ordered graphs $G$. (While not immediately obvious,
the maximum is well-defined. There are infinitely many prefixes of $G$ on which $\OCC$ 
will execute $j$ phases, due to the presence of singleton clusters. However, since these singletons induce an
independent set after $j$ phases, only finitely many graphs $G$ need to be considered in this maximum.)

Our objective now is to derive a recurrence relation for the sequence $\R_1,\R_2,\ldots$.
The value of $\R_1$ is some constant whose exact
value is not important here since we are interested in the asymptotic ratio. (We will, however, estimate
$\R_1$ later, when we bound the absolute competitive ratio in Section~\ref{subsubsec: absolute_ratio}).

\medskip

So now, assume that $j\ge 2$ and that $\R_1,\R_2,\ldots,\R_{j-1}$ are given. We want to bound
$\R_j$ in terms of $\R_1,\R_2,\ldots,\R_{j-1}$. To this end, let $G$ be the graph for which
$\R_j$ is realized, that is $\R_j = \opt_j(G)/\alg_{j-1}(G)$.
With $G$ fixed, to avoid clutter, we will omit it in our notation, writing
$\opt_j = \opt_j(G)$, $\alg_j = \alg_{j}(G)$, etc. In particular, $\R_j = \opt_j/\alg_{j-1}$.

We now claim that, without loss of generality, we can assume that in the
computation on $G$, the incremental
clusterings of Strategy~{\OCC} in each phase $1,2,\ldots,j-1$ do not contain any
singleton clusters. (The clustering in phase $j$, however, is allowed to contain singletons.)
We will refer to this property as the \emph{No-Singletons Assumption}.

To prove this claim, we modify the ordering of $G$ as follows: 
if there is a phase $i < j$ such that the incremental clustering of
$U$ in phase~$i$ clusters some vertex $v$ from $U$ as a singleton, then
delay the release of $v$ to the beginning of phase~$i+1$.
Postponing a release of a vertex that was clustered as a singleton in 
some phase $i < j$ to the beginning of phase $i+1$ does not affect the
computation and profit of $\OCC$, because vertices from singleton clusters remain in $U$,
and thus are available for clustering in phase $i+1$.
In particular, the value of $\alg_{j-1}$ will not change.
This modification also does not change the value of $\opt_j$, because the graph induced by the
first $j$ phases is the same, only the ordering of the vertices has been changed.
We can thus repeat this process until the No-Singletons Assumption is eventually satisfied.
This proves the claim.

With the No-Singletons Assumption, the set $U$ is empty at the beginning of each phase $0,1,\ldots,j$.
We can thus divide the vertices released in phases $0,1,\ldots,j$ 
into  disjoint \emph{batches}, where batch $B_i$ contains the vertices released in phase~$i$, for $i=0,1,\ldots,j$. 
(At the end of phase $i$, right before the clustering is updated, we will have $B_i = U$.)
For each such $i$, denote by $\deltaalg_i$ the maximum profit of a clustering of $B_i$.
Then the total profit after $i$ phases is $\alg_i = \deltaalg_0 + \cdots + \deltaalg_i$, and, 
by the definition of $\OCC$, we have
$\deltaalg_i  \geq  \gamma^i$ and $\alg_i \geq  (\gamma^{i+1}-1)/(\gamma-1)$.

For $i=0,1,\ldots,j$,
let $\barB_i = B_0\cup\ldots\cup B_i$ be the set of all vertices released in phases~$0,\ldots,i$.
Consider the optimal clustering of $\barB_j$.  In this clustering, every cluster has some number $a$ of
nodes in $\barB_{j-1}$ and some number $b$ of nodes in $B_j$. 
For any $a,b\ge 0$, let $k_{a,b}$ be the number of clusters of this form in the optimal clustering of $\barB_j$.
Then we have the following bounds, where the sums range over all integers $a,b\geq 0$.
\begin{align}
	\opt_j &=  \sum \profvalue{a+b}k_{a,b}
	\label{eqn: formula for optj}
	\\
	\opt_{j-1} &\ge  \sum \profvalue{a} k_{a,b}
	\label{eqn: bound for optj-1}
	\\
	\deltaalg_j &\ge  \sum \profvalue{b} k_{a,b}
	\label{eqn: bound for Deltajalg}
	\\
	\alg_{j-1} &\ge  \half \sum ak_{a,b}
	\label{eqn: bound for algj-1}
\end{align}
Equality (\ref{eqn: formula for optj}) is the definition of $\opt_j$.
Inequality (\ref{eqn: bound for optj-1}) holds 
because the right hand side represents the profit of the optimal clustering of $\barB_j$ restricted to $\barB_{j-1}$,
so it cannot exceed the optimal profit $\opt_{j-1}$ for $\barB_{j-1}$.
Similarly, inequality (\ref{eqn: bound for Deltajalg}) holds
because the right hand side is the profit of the optimal clustering of $\barB_j$ restricted to $B_j$,
while $\deltaalg_j$ is the optimal profit of $B_j$.
The last bound (\ref{eqn: bound for algj-1}) follows from the fact that (as a consequence of the No-Singletons Assumption)
our strategy does not have any singleton clusters in $\barB_{j-1}$. 
This means that in {\OCC}'s clustering of $\barB_{j-1}$ 
(which has $\sum ak_{a,b}$ vertices) each
vertex has an edge included in some cluster, so the number of these edges must be at
least $\half \sum a k_{a,b}$.

We can also bound $\deltaalg_j$, the strategy's profit increase, from above. We have
$\deltaalg_0 = 1$ and for each phase $j\geq1$,
\begin{equation}
	\deltaalg_j \;\le\;  \gamma^j + \half (\sqrt{8\gamma^j + 1}+1 )  \;<\; \gamma^j + \sqrt{2}\gamma^{j/2} +2-\sqrt{2}.
	\label{eqn: upper bound for Deltajalg}
\end{equation}
To show (\ref{eqn: upper bound for Deltajalg}), suppose that phase $j$ ends at step
$t$ (that is, right after $v_t$ is revealed). Consider the optimal partitioning $\calP$ of $B_j$, 
and let the cluster $c$ containing $v_t$ in $\calP$ have size $p+1$.
If we remove $v_t$ from this partitioning, we obtain a partitioning $\calP'$ of the batch
after step $t-1$, whose profit must be strictly smaller than $\gamma^j$.
So the profit of $\calP$ is smaller than $\gamma^j+p$.
In partitioning $\calP'$, the cluster $c-\{v_t\}$ has size $p$. We thus
obtain that $\binom{p}{2} < \gamma^j$, because, in the worst case, $\calP$ consists only of the cluster $c$.
This gives us $p< \half(\sqrt{8\gamma^j + 1}+1)$.
The second inequality in~(\ref{eqn: upper bound for Deltajalg}) follows by routine calculation.

From~(\ref{eqn: upper bound for Deltajalg}), by adding up all profits from phases
$0,\ldots,j$, we obtain an upper bound on the total profit of the strategy,
\begin{equation}
	\alg_{j} < \frac{\gamma^{j+1}-1}{\gamma-1} 
				+ \sqrt{2}\cdot\frac{\gamma^{(j+1)/2}-\gamma^{1/2}}{\gamma^{1/2}-1} + (2-\sqrt{2})j + 1.
	\label{eqn: upper bound for algj}
\end{equation}
%


\begin{lemma}\label{lem:profvalue}
For any pair of non-negative integers $a$ and $b$, the inequality 
\begin{equation*}
\profvalue{a+b} \leq (x+1)\profvalue{a} + \frac{x+1}{x}\profvalue{b} + a
\end{equation*}
holds for any $0<x\leq1$.
\end{lemma}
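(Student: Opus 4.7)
The plan is to simplify the inequality algebraically and then proceed by a case analysis on the comparison of $a$ and $b$. Using the identity $\binom{a+b}{2}=\binom{a}{2}+\binom{b}{2}+ab$, the claimed inequality is equivalent (after cancelling $\binom{a}{2}+\binom{b}{2}$ and multiplying by $2$) to $2a(b-1) \leq xa(a-1) + b(b-1)/x$. Writing $g(a,b,x) := xa(a-1) + b(b-1)/x - 2a(b-1)$, the goal becomes to show $g \geq 0$ for all non-negative integers $a,b$ and all $x \in (0,1]$. The degenerate cases $b \in \{0,1\}$ follow by direct inspection, since $g(a,0,x) = xa(a-1) + 2a$ and $g(a,1,x) = xa(a-1)$ are both manifestly non-negative.

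For $b \geq 2$ I would split on whether $a \geq b$. If $a \geq b \geq 2$, applying AM-GM to the first two terms of $g$ yields $xa(a-1) + b(b-1)/x \geq 2\sqrt{a(a-1)b(b-1)}$, and it then suffices to verify $\sqrt{a(a-1)b(b-1)} \geq a(b-1)$; squaring (both sides non-negative) reduces this to $(a-1)b \geq a(b-1)$, i.e., $a \geq b$, which holds by the case hypothesis.

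If instead $0 \leq a < b$ with $b \geq 2$, the plan is to prove $g(a,b,x) \geq g(a,b,1)$ by a monotonicity argument, and then evaluate at $x=1$. A direct computation gives
\begin{equation*}
g(a,b,x) - g(a,b,1) \;=\; (x-1)\left[a(a-1) - \frac{b(b-1)}{x}\right].
\end{equation*}
Since $x \leq 1$, the first factor is non-positive; and since $b > a$ forces $b(b-1) \geq a(a-1)$ (as $n(n-1)$ is non-decreasing on the non-negative integers), combined with $x \leq 1$, we obtain $b(b-1)/x \geq a(a-1)$, making the second factor non-positive as well. Their product is therefore non-negative, so $g(a,b,x) \geq g(a,b,1) = a(a-1) + b(b-1) - 2a(b-1) = (a-b)(a-b+1)$, a product of two consecutive integers and hence automatically non-negative.

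The main obstacle is that AM-GM alone does not resolve the case $a < b$: the bound $2\sqrt{a(a-1)b(b-1)} \geq 2a(b-1)$ is equivalent to $a \geq b$. The regime $a < b$ is precisely where the extra $+a$ term in the statement of the lemma is essential, and it is ultimately captured by the integer parity observation that $(a-b)(a-b+1)$, being a product of two consecutive integers, is automatically non-negative.
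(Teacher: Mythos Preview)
Your proof is correct. Both you and the paper reduce to showing the same nonnegative-difference function (the paper's $F(a,b,x)$ equals $x$ times your $g(a,b,x)$), and both split on the comparison between $a$ and $b$, but the tools differ. The paper treats $F$ as a quadratic in $x$, locates its vertex at $x_0=(b-1)/(a-1)$, and then checks the minimum either at $x=1$ (when $a\le b$) or at $x_0$ (when $a>b$); this forces it to handle the degenerate values $a\in\{0,1\}$ and $b\in\{0,1\}$ separately. Your argument replaces the vertex computation in the $a\ge b$ case by AM--GM on the two $x$-dependent terms, and in the $a<b$ case shows $g(a,b,x)\ge g(a,b,1)$ by a direct difference factorization rather than by appealing to the location of the parabola's minimum. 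The net effect is the same endpoint evaluation $g(a,b,1)=(a-b)(a-b+1)\ge 0$, but your route needs only the boundary check $b\in\{0,1\}$ and no separate treatment of small $a$, which makes it slightly leaner.
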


\begin{proof}
Define the function
\begin{align*}
F(a,b,x) & = 2x(x+1)\profvalue{a} + 2(x+1)\profvalue{b} + 2ax - 2x\profvalue{a+b}
\\
	& =  a^2x^2 - ax^2 + 2ax + b^2 - b - 2abx
\\
	& = (b-ax)^2 + ax(2-x) - b,
\end{align*}
i.e., $2x$ times the difference between the right hand side and the left hand side of the inequality above.
It is sufficient to show that $F(a,b,x)$ is non-negative for integers $a,b\geq0$ and $0<x\leq1$.

Consider first the cases when $a\in \{0,1\}$ or $b\in\{0,1\}$.
$F(0,b,x)=b(b-1)\geq0$, for any non-negative integer $b$ and any~$x$.
$F(a,0,x)=ax(ax-x+2)\geq ax(ax+1)>0$, for any positive integer $a$ and $0<x\leq1$.
$F(a,1,x)=x^2a(a-1)\geq0$, for any positive integer $a$ and any~$x$.
$F(1,2,x)=2-2x\geq0$, for $0<x\leq1$, and $F(1,b,x)=b^2-b+2x-2bx\geq b^2-3b\geq0$, for any integer $b\geq3$ and $0<x\leq1$.

Thus, it only remains to show that $F(a,b,x)$ is non-negative when both $a\geq2$ and~$b\geq2$.
The function $F(a,b,x)$ is quadratic in $x$ and hence has one local minimum at $x_0=\frac{b-1}{a-1}$, 
as can be easily verified by differentiating $F$ in~$x$.
Therefore, in the case when $a\leq b$, $F(a,b,x)\geq F(a,b,1)=(b-a)^2 - (b-a)\geq 0$, for $0<x\leq1$.
In the case when $a>b$, we have that $F(a,b,x)\geq F(a,b,\frac{b-1}{a-1})=\frac{(a-b)(b-1)}{a-1}>0$, which completes the proof.
\end{proof}

Suppose that $j\ge 2$ and fix some parameter $x$, $0 < x < 1$, whose value we will determine later.
Using Lemma~\ref{lem:profvalue}, the bounds (\ref{eqn: formula for optj})--(\ref{eqn: bound for algj-1}),
and the definition of $\R_{j-1}$, we obtain
\begin{align}
\R_j\alg_{j-1} = \opt_j 
& =
\sum \profvalue{a+b} k_{a,b} \nonumber
\\
& \leq
(x+1)\sum \profvalue{a}k_{a,b} + \frac{x+1}{x}\sum\profvalue{b}k_{a,b} + \sum a k_{a,b} \nonumber
\\
& \leq
(x+1) \opt_{j-1} + \frac{x+1}{x} \deltaalg_j  + 2 \alg_{j-1} \label{eqn: opt recurrence}
\\
& \le  
(x+1)\R_{j-1} \alg_{j-2} + \frac{x+1}{x} \deltaalg_j  + 2 \alg_{j-1}. \nonumber
\end{align}
Thus $\R_j$ satisfies the recurrence
\begin{equation}
	\R_j \leq	\frac{ x+1}{x \alg_{j-1}}
	 			\left[\rule[0pt]{0pt}{3ex}\,x\alg_{j-2} \R_{j-1} +\deltaalg_j\,\right]	 + 2.
				\label{eqn: recurrence for Rj}
\end{equation}
From inequalities~(\ref{eqn: upper bound for Deltajalg})~and~(\ref{eqn: upper bound for algj}),
we have 
\begin{equation*}
	\deltaalg_i = \gamma^i(1+o(1))
	\quad \textrm{and} \quad
	\alg_i = \frac{\gamma^{i+1}(1+o(1))}{\gamma-1}.
\end{equation*}
for all $i = 0,1,\ldots,j$. Above, we use the notation $o(1)$ to denote any function that tends to $0$ as the
phase index $i$ goes to infinity (with $x$ and $\gamma$ assumed to be some fixed constants, still to be determined).
Substituting into  recurrence (\ref{eqn: recurrence for Rj}), we get
\begin{equation}
	\R_j \leq	\Big( \frac{x+1}{\gamma} +o(1) \Big)\cdot \R_{j-1}
				+ \frac{(x+1)(\gamma-1)}{x} + 2 + o(1).
				\label{recurrence for Rj, subst}
\end{equation}
Now define
\begin{equation}
 	\R = \frac{\gamma ( \gamma x + x + \gamma - 1 ) }{ x (\gamma -x - 1)}.
	\label{eqn: definition of R}
 \end{equation}
%


\begin{lemma}\label{lem: bound on Rj's}
Assume that $x+1 < \gamma$, then $\R_j = \R+o(1)$.
\end{lemma}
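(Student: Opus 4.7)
The plan is to read inequality~(\ref{recurrence for Rj, subst}) as a perturbed linear recurrence $\R_j \le \alpha_j \R_{j-1} + \beta_j$, with $\alpha_j = (x+1)/\gamma + o(1)$ and $\beta_j = (x+1)(\gamma-1)/x + 2 + o(1)$. The hypothesis $x+1<\gamma$ makes the multiplier $\alpha := \lim_j \alpha_j = (x+1)/\gamma$ strictly less than $1$, so the recurrence is contractive toward its fixed point, which I will identify as exactly $\R$. Since $\R_j$ is only upper-bounded by this recurrence, the claim ``$\R_j = \R + o(1)$'' will mean $\limsup_j \R_j \le \R$.

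\textbf{Key steps.} First, I would verify by routine algebra that $\R$ as defined in~(\ref{eqn: definition of R}) is exactly the fixed point of $y = \alpha y + \beta$, where $\alpha=(x+1)/\gamma$ and $\beta = (x+1)(\gamma-1)/x + 2$; equivalently, $\R = \beta/(1-\alpha)$. Clearing denominators on the right and simplifying reproduces the formula for $\R$ given in~(\ref{eqn: definition of R}). Second, I would substitute $S_j := \R_j - \R$ and subtract the fixed-point identity $\R = \alpha\R + \beta$ from~(\ref{recurrence for Rj, subst}). This gives
\begin{equation*}
  S_j \;\le\; \alpha_j S_{j-1} + (\alpha_j - \alpha)\R + (\beta_j - \beta) \;=\; \alpha_j S_{j-1} + o(1),
\end{equation*}
where the $o(1)$ term is a function of $j$ alone (with $x$ and $\gamma$ fixed) that vanishes as $j\to\infty$, by the $o(1)$ estimates in~(\ref{eqn: upper bound for Deltajalg}) and~(\ref{eqn: upper bound for algj}). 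Third, I invoke the standard contraction argument: for any $\eps>0$, choose $\alpha'<1$ with $\alpha<\alpha'$ and an index $j_0$ such that $\alpha_j\le \alpha'$ and the additive error is at most $\eps(1-\alpha')$ for all $j\ge j_0$. Iterating the recurrence from $j_0$ gives $S_j \le \alpha'^{\,j-j_0} S_{j_0} + \eps$, so $\limsup_j S_j \le \eps$. Letting $\eps\to 0$ yields $\limsup_j S_j\le 0$, i.e.\ $\R_j \le \R + o(1)$.

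\textbf{Main obstacle.} The only genuinely delicate piece is the algebraic verification that the fixed point of the limit recurrence coincides with the expression for $\R$ in~(\ref{eqn: definition of R}); everything else is a textbook application of geometric contraction. A secondary bookkeeping issue is confirming that the $o(1)$ error terms in~(\ref{recurrence for Rj, subst}) are uniform once $x$ and $\gamma$ are treated as constants, so that they can legitimately be absorbed into a single vanishing additive term; this is straightforward from the explicit estimates~(\ref{eqn: upper bound for Deltajalg}) and~(\ref{eqn: upper bound for algj}), where the dominant growth in both $\deltaalg_j$ and $\alg_j$ is $\gamma^j$ up to lower-order corrections.
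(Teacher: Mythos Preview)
Your proposal is correct and follows essentially the same route as the paper: substitute $S_j=\R_j-\R$ (the paper writes $\rho_j$), derive the contracted recurrence $S_j\le\big((x+1)/\gamma+o(1)\big)S_{j-1}+o(1)$, and conclude $\limsup_j S_j\le 0$ from $x+1<\gamma$. Your write-up is in fact more explicit than the paper's sketch, and your remark that only the one-sided bound $\R_j\le\R+o(1)$ is actually established is well taken (and is all that the subsequent argument needs).
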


\begin{proof}
The proof is by routine calculus, so we only provide a sketch. For all $j\ge 1$ let 
$\rho_j = \R_j - \R$.
Then, substituting this into (\ref{recurrence for Rj, subst}) and simplifying, we obtain
that the $\rho_j$'s satisfy the recurrence
\begin{equation}
 	\rho_j \leq \Big( \frac{x+1}{\gamma} +o(1) \Big)\cdot  \rho_{j-1} + o(1).
 \end{equation}
Since $x+1 < \gamma$, this implies that $\rho_j = o(1)$, and the lemma follows.
\end{proof}

Lemma~\ref{lem: bound on Rj's} gives us (essentially) a bound of $\R$ on the asymptotic
competitive ratio of Strategy~{\OCC}, for fixed
values of parameters $\gamma$ (of the strategy) and $x$ (of the analysis).
We can now choose $\gamma$ and $x$ to make $\R$ as small as possible.
$\R$ is minimized for parameters $x= \half (5-\sqrt{13}) \approx 0.697$ and 
$\gamma= \half (3+\sqrt{13}) \approx 3.303$, yielding 
\begin{equation*}
 \R = \onesixth (47+13\sqrt{13}) \approx 15.646.
\end{equation*}
Using Lemma~\ref{lem: bound on Rj's}, for each graph $G$ and phase $j$, we have that
$\opt_j(G) \le (\R+o(1))\alg_{j-1}(G)$. Since, in fact, $\R < 15.646$, this implies that 
$\opt_j(G) \le 15.646 \cdot \alg_{j-1}(G)$, as long as $j$ is large enough.
Thus $\opt_j(G) \le 15.646 \cdot \alg_{j-1}(G) + O(1)$ for all phases $j$.
As we discussed earlier, bounding $\opt_j(G)$ in terms of $\alg_{j-1}(G)$ like this
is sufficient to establish a bound on the (asymptotic) competitive ratio of Strategy~$\OCC$.
Summarizing, we obtain the following theorem.
\begin{theorem}\label{thm: occ upper bound}
The asymptotic competitive ratio of Strategy~$\OCC$ is at most~$15.646$.
\end{theorem}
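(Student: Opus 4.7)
The plan is to convert the per-phase bound from Lemma~\ref{lem: bound on Rj's} into an overall asymptotic competitive ratio by (i) optimizing the free parameters $\gamma$ (strategy) and $x$ (analysis) to minimize the quantity $\R$ defined in~(\ref{eqn: definition of R}), and (ii) arguing that the transient behavior in the first few phases is absorbed by the additive constant $\beta$ in the competitive-ratio definition~(\ref{eqn: max competitive ratio}).

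First I would carry out the optimization of $\R(\gamma,x)$ subject to the feasibility constraint $x+1<\gamma$ (required by Lemma~\ref{lem: bound on Rj's}) and $0<x\le 1$ (required by Lemma~\ref{lem:profvalue}). I would compute $\partial\R/\partial x=0$ and $\partial\R/\partial \gamma=0$, which yields a small polynomial system. Solving it gives the interior critical point $x = \tfrac12(5-\sqrt{13})\approx 0.697$ and $\gamma=\tfrac12(3+\sqrt{13})\approx 3.303$, with corresponding value $\R = \tfrac16(47+13\sqrt{13})<15.646$. A quick check confirms that $x<1$ and $x+1\approx 1.697 < \gamma\approx 3.303$, so the constraints are strictly satisfied and Lemma~\ref{lem: bound on Rj's} applies.

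Next, with $\gamma$ and $x$ chosen as above, Lemma~\ref{lem: bound on Rj's} gives $\R_j = \R+o(1)$ as $j\to\infty$. Since $\R<15.646$ strictly, there exists a threshold $j_0$ such that $\R_j\le 15.646$ for all $j\ge j_0$. By the definition~(\ref{eqn: definition of Rj}), this means $\opt_j(G)\le 15.646\cdot\alg_{j-1}(G)$ for every graph $G$ and every phase $j\ge j_0$. For the finitely many phases $j<j_0$, the values $\R_j$ are bounded by some constant depending only on $\gamma$, $x$, and $j_0$; hence there exists a constant $\beta$ such that $\opt_j(G)\le 15.646\cdot\alg_{j-1}(G)+\beta$ holds uniformly in $j$ and $G$. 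As argued in the text preceding~(\ref{eqn: definition of Rj}), during any phase $j$ the optimum profit is non-decreasing while $\OCC$'s profit stays at $\alg_{j-1}$ except at the very last step of the phase, so the worst-case ratio within phase $j$ is captured exactly by $\opt_j/\alg_{j-1}$; therefore the bound $\profit_\smallOPT(G)\le 15.646\cdot\profit_{\OCC}(G)+\beta$ holds after every step, which matches~(\ref{eqn: max competitive ratio}).

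The main obstacle is less mathematical than bookkeeping: making sure the $o(1)$ terms coming from~(\ref{eqn: upper bound for Deltajalg}) and~(\ref{eqn: upper bound for algj}) — which feed into~(\ref{recurrence for Rj, subst}) — genuinely decay fast enough that their accumulation through the (contracting) linear recurrence for $\rho_j$ remains $o(1)$. Since the contraction factor $(x+1)/\gamma$ is strictly less than $1$ under our choice of parameters, this is straightforward, but one must verify that the additive $o(1)$ perturbations in the recurrence are summable (equivalently, bounded), which they are because $\deltaalg_j/\alg_{j-1}$ tends to $\gamma-1$ and $\alg_{j-2}/\alg_{j-1}$ tends to $1/\gamma$ geometrically. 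Once this is checked, the theorem follows by combining the optimization step with the per-phase translation to the overall ratio.
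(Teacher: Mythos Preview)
Your proposal is correct and follows essentially the same approach as the paper: optimize $\R(\gamma,x)$ from~(\ref{eqn: definition of R}) to find $x=\tfrac12(5-\sqrt{13})$, $\gamma=\tfrac12(3+\sqrt{13})$, and $\R=\tfrac16(47+13\sqrt{13})<15.646$, then invoke Lemma~\ref{lem: bound on Rj's} and absorb the finitely many early phases into the additive constant $\beta$. Your final paragraph about the accumulation of $o(1)$ terms through the contracting recurrence is already covered by Lemma~\ref{lem: bound on Rj's} itself, so it is redundant rather than an additional obstacle; the only point worth making explicit (and you do, implicitly) is that for $j<j_0$ the quantity $\alg_{j-1}(G)$ is bounded by a function of $j$ alone via~(\ref{eqn: upper bound for algj}), so $\opt_j(G)\le \R_j\,\alg_{j-1}(G)$ is uniformly bounded and can be absorbed into $\beta$.
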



\subsubsection{Absolute Competitive Ratio}
\label{subsubsec: absolute_ratio}


In fact, for $\gamma= \half (3+\sqrt{13})$, Strategy~{\OCC} has a low absolute competitive ratio as well.
We show that this ratio is at most $22.641$. The argument uses the same value of
parameter $x= \half (5-\sqrt{13})$, but requires a more refined analysis.

When phase~0 ends, the competitive ratio is $1$. For $j\ge 1$, let $\opt'_j$ 
be the optimal profit right before phase $j$ ends. (Earlier we used
$\opt_j$ to estimate this value, but $\opt_j$ also includes the profit for
the last step of phase~$j$.)
It remains to show that for phases $j\ge 1$ we have $\R'_j \le 22.641$,
where $\R'_j = \opt'_j/\alg_{j-1}$.

By exhaustively analyzing the behavior of Strategy~$\OCC$ in phase~1, taking into account that
$\gamma\approx 3.303 > 3$, we can establish that $\R'_1=10$.
We will then bound the remaining ratios using a refined version of recurrence~(\ref{eqn: recurrence for Rj}).

We start by estimating $\R'_1$.
Let $t$ be the last step of phase~1. Since $\gamma\approx 3.303$, after step $t-1$ the profit of the vertices released in phase~1 is at most $3$. We can assume that phase~0 has only two vertices $v_1,v_2$ connected by an edge. Let $H$ be the graph induced by $v_1,\ldots,v_{t-1}$ and $H'$ be its subgraph induced by $v_3,\ldots,v_{t-1}$. We thus want to bound the optimal profit of $H$, under the assumption that the optimal profit of $H'$ is at most~$3$.

Denote by $\KK_i$ the clique with $i$ vertices. The optimal clustering of $H'$ cannot include a $\KK_4$, and either
\begin{enumerate} 
\item\label{case:noK3}
$H'$ has no $\KK_3$, and it has at most three $\KK_2$ cliques, or
\item\label{case:hasK3}
$H'$ has a $\KK_3$, with each edge of $H'$ having at least one endpoint in this $\KK_3$.
\end{enumerate}
In Case~\ref{case:noK3}, $H$ cannot contain a $\KK_5$. If a clustering of $H$ includes a $\KK_4$ then this $\KK_4$ contains $v_1$, $v_2$,
and two vertices from phase~$1$. So in addition to this $\KK_4$ it can at best include two $\KK_2$'s, for a total profit of at most $8$. 
In Case~\ref{case:hasK3}, if a clustering of $H$ includes a $\KK_5$, then it cannot include any cluster except this $\KK_5$, 
so its profit is $10$. If a clustering of $H$ includes a $\KK_4$, then this $\KK_4$ must contain at least one of $v_1$ and $v_2$, 
and it may include at most one other clique of type $\KK_2$. This will give a total profit of at most $7$. 
Summarizing, in each case the profit of $H$ is at most $10$ giving us $\R'_1\leq10$, as claimed.

For phases $j\geq2$, we can tabulate upper bounds for $\R'_j$ by explicitly
computing the ratios $\R'_j=\opt'_j/\alg_{j-1}$ using the following modification of
recurrence~(\ref{eqn: recurrence for Rj}), 
\begin{equation}
	\R'_j \leq	\frac{ x+1}{x \alg_{j-1}}
	 			\left[\rule[0pt]{0pt}{3ex}\,
	 			x\alg_{j-2} \R'_{j-1} +\deltaalg_j\,\right]	 + 2,
				\label{eqn: recurrence for R'j}
\end{equation}
where we use the more exact bounds
\begin{equation*}
	\lceil\gamma^j\rceil\leq  \deltaalg_j  \leq \floor{ \gamma^j + \half( \sqrt{8\gamma^j+1} + 1 )},
\end{equation*}
obtained by rounding the bounds $\deltaalg_j \ge \gamma^j$ and~(\ref{eqn: upper bound for Deltajalg}), which we can do
because $\deltaalg_j$ is integral.
From the definition of $\alg_j \defeq 1 + \sum_{i=1}^j \deltaalg_i$ we compute the first few estimates as shown in Table~\ref{tab:values}.

\begin{table}
\caption{Some initial bounds for $\alg_j$ and the absolute competitive ratio.\label{tab:values}}
\begin{center}
\begin{tabular}{|l|r|r|r|r|r|r|r|r|r|} \hline
Phase ($j$) & 0 & 1 & 2 & 3 & 4 & 5 & 6 & 7 & 8 
\\ \hline
min $\alg_j$
    & 1
    & 5
    & 16
    & 53
    & 172
    & 566
    & 1~864
    & 6~152
    & 20~311
\\ \hline
max $\alg_j$
    & 1
    & 7
    & 23
    & 68
    & 202
    & 623
    & 1~972
    & 6~352
    & 20~679
\\ \hline
Bound ($\R'_j$)
    & 1.000
	& 10.000
	& 13.185
	& 18.636
	& 21.881
	& {\bf 22.641}
	& 21.516
	& 19.925
	& 18.509
\\ \hline
\end{tabular}
\end{center}
\end{table}

To bound the sequence $\{\R'_j\}_{j\geq9}$ we rewrite recurrence~(\ref{eqn: recurrence for R'j}) as
\begin{equation*}
\R'_j \leq
		\frac{(x+1)\alg_{j-2}}{\alg_{j-1}} 
			\cdot	\R'_{j-1} + \frac{(x+1)\deltaalg_{j}}{x\alg_{j-1}} + 2
		=
			\alpha_j \R'_{j-1} + \beta_j,
\end{equation*}
and bound $\alpha_j$ and $\beta_j$ using (\ref{eqn: upper bound for Deltajalg}) and~(\ref{eqn: upper bound for algj}). With routine calculations, we can establish the bounds $\alpha_j< \threefifths$ and $\beta_j<8$, for $j\geq8$.

Thus, $\R'_j\leq\hat{\R}_j$, where $\hat{\R}_j$ is 
\begin{equation*}
\hat{\R}_j = \threefifths \hat{\R}_{j-1} + 8 \leq 20 - a\left(\threefifths\right)^j,
\end{equation*}
for $j\geq8$ and some positive constant $a$. The sequence $\{\hat\R_j\}_{j\geq9}$, is thus bounded above by a monotonically growing function of $j$ having limit $20$ and hence $\hat\R_j\leq20$ for every $j\geq9$.

Combining this with the bounds estimated in Table~\ref{tab:values}, we see that the largest bound on $\R'_j$ is $22.641$ given for $j=5$. 
We can thus conclude that the absolute competitive ratio of \OCC\ is at most~$22.641$.

We can improve on the absolute competitive ratio by choosing different values for $\gamma$ and $x$ that allow the asymptotic competitive ratio to increase slightly. 
The optimal values can be found empirically (using mathematical software) to be $\gamma=4.02323428$ and $x=0.823889$,
giving asymptotic competitive ratio 15.902 and absolute competitive ratio~20.017.


\subsection{A Lower Bound for Strategy~$\OCC$}
\label{subsec: algOCC_lower_bound}



In this section we will show that, for any choice of $\gamma$, the worst-case ratio of Strategy~$\OCC$ is at least~$10.927$. 

Denote by $B_j$ the $j$-th batch, that is the vertices released in phase $j$. 
We will use notation $\alg_j$ for the profit of $\OCC$ and $\opt_j$ for the optimal profit on the sub-instance consisting of the first $j$ batches. To avoid clutter we will omit lower order terms in our calculations. In particular, we focus on $j$ being large enough, treating $\gamma^j$ as integer, and all estimates for $\alg_j$ and $\opt_j$ given below are meant to hold within a factor of $1 \pm o(1)$. (The asymptotic notation is with respect to the phase index $j$ tending to~$\infty$.)

We start with a simpler construction that shows a lower bound of $9$; then we will explain how to improve it to $10.927$. In the instance we construct, all batches will be disjoint, with the $j$th batch $B_j$ having $2\gamma^j$ vertices connected by $\gamma^j$ disjoint edges (that is, a perfect matching).  We will refer to these edges as \emph{batch} edges. The edges between any two batches $B_i$ and $B_j$, for $i<j$, form a complete bipartite graph. These edges will be called \emph{cross}~edges; see Figure~\ref{fig: example ratio}.


\begin{figure}
\begin{center}
\includegraphics{./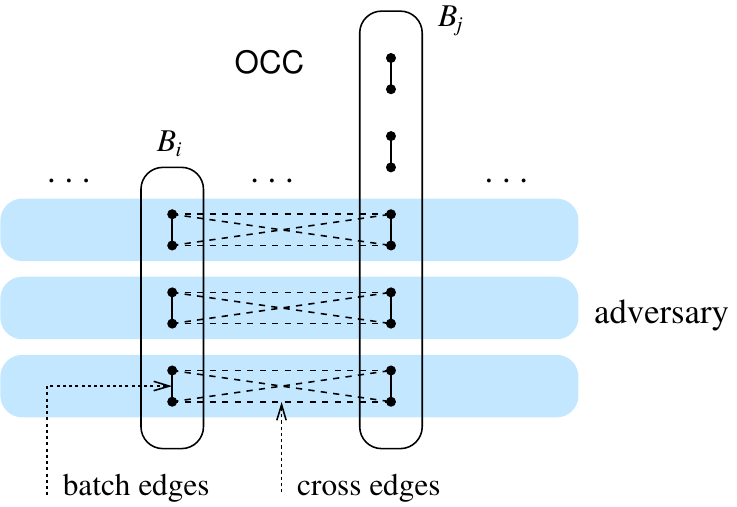}
\end{center}
\caption{The lower bound example for Strategy~$\OCC$. The figure shows two batches $B_i$ and $B_j$, for $i < j$. Batch edges, drawn with solid lines, are collected by Strategy~$\OCC$. Dashed lines show cross edges that are in the adversary's clustering. Shaded regions illustrate the cliques in the adversary's clustering.}
\label{fig: example ratio}
\end{figure}


At the end of each phase $j$, the strategy will collect all $\gamma^j$ edges inside $B_j$. Therefore, by summing up the geometric sequence, right before the end of phase $j$ (before the strategy adds the new edges from $B_j$ to its clustering), the strategy's profit is
\begin{equation*}
\alg_{j-1} = \sum_{i=0}^{j-1}\gamma^i \le \frac{\gamma^j}{\gamma-1}.
\end{equation*}
After the first $j$ phases, the adversary's clustering consists of cliques $C_p$, $p=0,1,\ldots,\gamma^j-1$, where $C_p$ contains the $p$-th edge (that is, its both endpoints) from each batch $B_i$ for $i= p,p+1,\ldots,j$; see Figure~\ref{fig: example ratio}. We claim that the adversary gain after $j$ phases satisfies
\begin{equation}
	\opt_j \ge \opt_{j-1} + \gamma^j + 4\sum_{i=0}^{j-1} \gamma^i
		= \opt_{j-1} + \frac{(\gamma+3)\gamma^j}{\gamma-1}.
		\label{qen: bad example opt_j}
\end{equation}
(Recall that all equalities and inequalities in this section are assumed to hold only within a factor of $1 \pm o(1)$.)
We now justify this bound. The second term $\gamma^j$ is simply the number of batch edges in $B_j$. To see where each term $4\gamma^i$ comes from, consider the $p$-th batch edge from $B_i$, for $i < j$. When we add $B_j$ after phase $j$, the adversary can add the $4$ cross edges connecting this edge's endpoints to the endpoints of the $p$th batch edge in $B_j$ to $C_p$. Overall, this will add  $4\gamma^i$ cross edges between $B_i$ and $B_j$ to the existing adversary's cliques.

From recurrence (\ref{qen: bad example opt_j}), by simple summation, we get
\begin{equation*}
	\opt_j \ge \frac{(\gamma+3)\gamma^{j+1}}{(\gamma-1)^2}.
\end{equation*}
Dividing it by $\OCC$'s profit of at most $\gamma^j/(\gamma-1)$, we obtain that the ratio is at least $\frac{\gamma(\gamma+3)}{\gamma-1}$, which, by routine calculus, is at least~$9$.

\smallskip

We now outline an argument showing how to improve this lower bound to $10.927$. The new construction is almost identical to the previous one, except that we change the very last batch $B_j$. As before, each batch $B_i$, for $i < j$, has $\gamma^i$ disjoint edges. Batch $B_j$ will also have $\gamma^j$ edges, but they will be grouped into 
$q = \onethird \gamma^j$ disjoint triangles. (So $B_j$ has $\gamma^j$ vertices.)
For $p = 0,1,\ldots,q-1$, we add the $p$-th triangle to clique $C_p$. (If $q > \gamma^{j-1}$, the last $q-\gamma^{j-1}$ triangles will form new cliques.)

This modification will preserve the number of edges in $B_j$ and thus it will not affect the strategy's profit. But now, for each $i=0,1,\ldots,j-1$ and each $p = 0,1,\ldots,\min(q,\gamma^{i})-1$,  we can connect the two vertices in $B_i\cap C_p$ to three vertices in $B_j$, instead of two. This creates two new cross edges that will be called  \emph{extra} edges. It should be intuitively clear that the number of these extra edges is $\Omega(\gamma^j)$, which means that this new construction gives a ratio strictly larger than~$9$.

Specifically, to estimate the ratio, we will distinguish three cases, depending on the value of $\gamma$. Suppose first that $\gamma \ge 3$. Then $q \ge \gamma^{j-1}$, so the number of extra edges is $2\sum_{i=0}^{j-1}\gamma^i = 2\gamma^j/(\gamma-1)$, because each vertex in $B_0\cup B_1\cup\cdots\cup B_{j-1}$ is now connected to three vertices in $B_j$, not two. Thus the new optimal profit is
\begin{equation*}
	\opt'_j = \opt_j + \frac{2\gamma^j}{\gamma-1} = \frac{(\gamma^2+5\gamma-2)\gamma^j}{(\gamma-1)^2}.
\end{equation*}
Dividing by $\OCC$'s profit, the ratio is at least $\frac{\gamma^2+5\gamma-2}{\gamma-1}$, which is at least $11$ for $\gamma\geq3$.

The second case is when $\sqrt{3}\le \gamma \le 3$. Then $\gamma^{j-2}\le q \le \gamma^{j-1}$. 
In this case all vertices in $B_0\cup B_1\cup \cdots \cup B_{j-2}$ and $\twothirds \gamma^j$ vertices in $B_{j-1}$ get an extra edge, so the number of extra edges is $2\gamma^{j-1}/(\gamma-1) + \twothirds \gamma^j$. Therefore the new adversary profit is
\begin{equation*}
	\opt'_j = \opt_j + 2\,\frac{\gamma^{j-1}}{\gamma-1} + \twothirds \gamma^j = \frac{(5\gamma^3 + 5\gamma^2 + 8\gamma - 6)\gamma^{j-1}}{3(\gamma-1)^2}.
\end{equation*}
We thus have that the ratio is at least $\frac{5\gamma^3 + 5\gamma^2 + 8\gamma - 6}{3\gamma(\gamma-1)}$. Minimizing this quantity, we obtain that the ratio is at least~$10.927$.

The last case is when $1 < \gamma \le \sqrt{3}$. In this case, even using the earlier strategy (without any extra edges), 
we have that the ratio $\opt_j/\alg_{j-1} = \frac{ \gamma(\gamma+3)}{ \gamma-1}$ is at least $3+6\sqrt{3}\approx 11.2$ (it is minimized for $\gamma = \sqrt{3}$).



\subsection{A Lower Bound of~6 for $\MaxCC$}
\label{subsec: lower_bound_6}




We now prove that any deterministic online strategy $\calS$ for the clique clustering problem has competitive ratio at least $6$. 
We present the proof for the absolute competitive ratio and explain later how to extend it to the asymptotic ratio. The lower bound is established by
showing, for any constant $R < 6$, an adversary strategy for constructing an input graph $G$
on which $\profit_{\smallOPT}(G) \ge R\cdot \profit_{\calS}(G)$, that is
the optimal profit is at least $R$ times the profit of~$\calS$.


\paragraph{Skeleton trees.}
Fix some non-negative integer $D$. (Later we will make the value of $D$ depend on $R$.)
It is convenient to describe the graph constructed by the adversary in terms of
its underlying \emph{skeleton tree} $\calT$, which is a rooted binary tree. The
root of $\calT$ will be denoted by $r$.
For a node $v\in \calT$, define the \emph{depth} of $v$ to be the
number of edges on the simple path from $v$ to $r$.
The adversary will only use skeleton trees of the following special form:
each non-leaf node at depths $0,1,\ldots,D-1$ has two children, 
and each non-leaf node at levels at least $D$
has one child. Such a tree $\calT$ can be thought of as consisting of its \emph{core subtree},
which is the subtree of $\calT$ induced by the nodes of depth up to $D$, 
with paths attached to its leaves at level $D$. 
The nodes of $\calT$ at depth $D$ are the leaves of the core subtree. If $v$ is a leaf
of the core subtree of $\calT$ then the path extending from $v$ down to a leaf of $\calT$
is called a \emph{tentacle} -- see Figure~\ref{fig: skeleton tree}. (Thus $v$ belongs
both to the core subtree and to the tentacle attached to $v$.)
The length of a tentacle is the number of its edges.
The nodes in the tentacles are all considered to be left children of their parents.


\begin{figure}[t]
\begin{center}
\includegraphics[width=4.25in]{./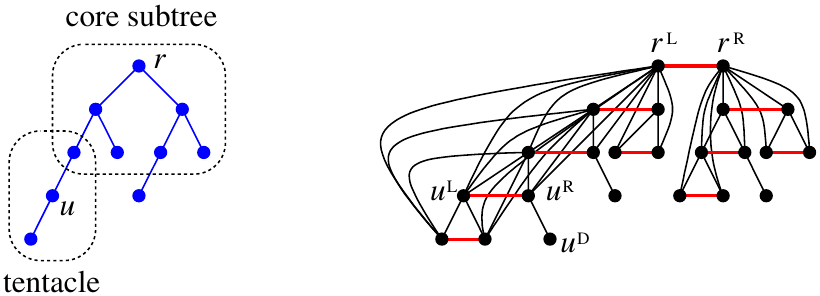}
\end{center}
\caption{On the left, an example of a skeleton tree $\calT$. The core subtree of
$\calT$ has depth $2$ and two tentacles, one of length $2$ and one of length $1$. 
On the right, the corresponding graph $\calG$.}
\label{fig: skeleton tree}
\end{figure}


\paragraph{Skeleton-tree graphs.}
The graph represented by a skeleton tree $\calT$ will be denoted by $\calG$.
We differentiate  between the \emph{nodes} of  $\calT$ and the \emph{vertices} of $\calG$. 
The relation between $\calT$ and $\calG$ is illustrated in Figure~\ref{fig: skeleton tree}.
The graph $\calG$ is obtained from the tree $\calT$ as follows:
\begin{itemize}
	\item For each node $u \in\calT$ we create two vertices
	$u^\LL$ and $u^\RR$ in $\calG$, with an edge between them. This edge $(u^\LL,u^\RR)$ is called
	the \emph{cross edge} corresponding to $u$.
	\item Suppose that $u,v\in\calT$. If $u$ is in the left subtree of $v$ then
	$(u^\LL,v^\LL)$ and  $(u^\RR,v^\LL)$ are edges of $\calG$.
	If $u$ is in the right subtree of $v$ then
	$(u^\LL,v^\RR)$ and  $(u^\RR,v^\RR)$ are edges of $\calG$. These edges are
	called \emph{upward edges}.
	\item If $u\in \calT$ is a node in a tentacle of $\calT$ and is not a leaf of $\calT$,
	then $\calG$ has a vertex $u^\DD$ with edge $(u^\DD,u^\RR)$. This edge
	is called a \emph{whisker}.
\end{itemize}


\paragraph{The adversary strategy.}
The adversary constructs $\calT$ and $\calG$ gradually, in response to strategy $\calS$'s choices.
Initially, $\calT$ is a single node $r$, and thus
$\calG$ is a single edge $(r^\LL,r^\RR)$. 
At this time, $\profit_\calS(\calT) = 0$ and $\profit_{\smallOPT}(\calT) = 1$, so
$\calS$ is forced to collect this edge (that is, it creates a $2$-clique
$\braced{ r^\LL,r^\RR }$), since otherwise the adversary can immediately stop with unbounded absolute competitive ratio. 

In general, the invariant of the construction is that, at each step, the only non-singleton cliques
that $\calS$ can add to its clustering are cross edges that correspond to the current leaves of $\calT$.
Suppose that, at some step, $\calS$ collects a cross edge $(u^\LL,u^\RR)$, corresponding to 
node $u$ of $\calT$. ($\calS$ may collect more cross edges in one step; if so, the adversary
applies its strategy to each such edge independently.)
If $u$ is at depth less than $D$, the adversary extends $\calT$ by adding two children of $u$.
If $u$ is at depth at least $D$, the adversary only adds the left child of $u$, thus extending
the tentacle ending at $u$.
In terms of $\calG$, the first move appends two triangles to $u^\LL$ and $u^\RR$, with
all corresponding upward edges. The second move appends a triangle to $u^\LL$ and a whisker to $u^\RR$
(see Figure~\ref{fig: adversary moves}).
In the case when $\calS$ decides not to collect any cross edges at some step, the adversary stops the
process.


\begin{figure}
\begin{center}
\includegraphics[width = 3.5in] {./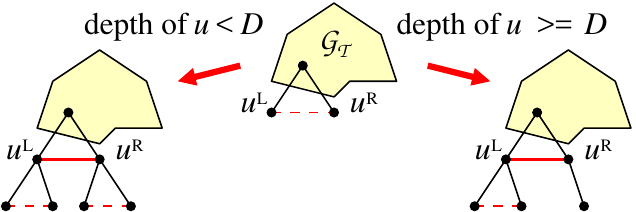}
\end{center}
\caption{Adversary moves. Upward edges from new vertices are not shown, to avoid clutter. Dashed
lines represent cross edges that are not collected by $\calS$, while thick lines represent
those that are already collected by $\calS$.}
\label{fig: adversary moves}
\end{figure}


Thus the adversary will be building the core binary skeleton tree down to depth $D$,
and from then on, if the game still continues, it will extend the tentacles. 
Our objective is to prove that, in each step, right after the adversary extends the graph but before 
$\calS$ updates its clustering, we have
\begin{equation}
	\profit_{\smallOPT}(\calT)  \ge (6-\epsilon_D) \cdot \profit_\calS(\calT),
	\label{eqn: 6-eps lower bound}
\end{equation}
where $\epsilon_D \to 0$ when $D\to\infty$.
This is enough to prove the lower bound of  $6-\epsilon_D$ on the absolute ratio. 
The reason is this:
If $\calS$ does not collect any edges at some step, the game stops, the ratio is 
 $6-\epsilon_D$, and we are done. Otherwise, the adversary will stop the game
after $2^{D+1} + M$ steps, where $M$ is some large integer. Then the profit of
$\calS$ is bounded by $2^{D+1} + M$ (the number of steps) plus the number
of remaining cross edges, and there are at most $2^D$ of those, so $\calS$'s profit
is at most $2^{D+2}+M$.
At that time, $\calT$ will have at least $M$ nodes in tentacles and at
most $2^D$ tentacles, so there is at least one tentacle of length $M/2^D$,
and this tentacle contributes $\Omega((M/2^D)^2)$ edges to the optimum.
Thus for $M$ large enough, the ratio between the optimal profit and the
profit of $\calS$ will be larger than $6$ (or any constant, in fact).

Once we establish~(\ref{eqn: 6-eps lower bound}), the lower bound of $6$ will
follow, because for any fixed $R < 6$ we can take $D$ large enough to
get a lower of $6-\epsilon_D \ge R$.


\paragraph{Computing the adversary's profit.}
We now explain how to estimate the adversary's profit for $\calG$. To this end, we
provide a specific recipe for computing a clique clustering of $\calG$. We do not
claim that this particular clustering is actually optimal, but it is a lower bound on 
the optimum profit, and thus it is sufficient for our purpose.

For any node $v\in\calT$ that is not a leaf, denote by $\calP^\LL(v)$ the longest path from $v$ to a
leaf of $\calT$ that goes through the left child of $v$.
If $v$ is a non-leaf in the core tree, and thus has a right child, then $\calP^\RR(v)$ is the longest path from $v$ to a
leaf of $\calT$ that goes through this right child. In both cases, ties are broken
arbitrarily but consistently, for example in favor of the leftmost leaves.
If $v$ is in a tentacle (so it does not have the right child), then we let
$\calP^\RR(v)=\braced{v}$.

Let $\calP^\LL(v) = (v=v_1,v_2,...,v_m)$, where $v_m$ is a leaf of $\calT$. 
Since $v$ is not a leaf, the definition of $\calT$ implies that $m\ge 2$.
We now define the clique $C^\LL(v)$ in $\calG$ that corresponds to $\calP^\LL(v)$. Intuitively,
for each $v_i$ we add to $C^\LL(v)$ one of the corresponding vertices,
$v_i^\LL$ or $v_i^\RR$, depending on whether $v_{i+1}$ is the left 
or to the right child of $v_i$.
The following formal definition describes the construction of $C^\LL(v)$ in a top-down fashion:
\begin{itemize}
	\item $v_1^\LL\in C^\LL(v)$. 
	\item Suppose that $1\le i \le m-1$
	 		and that $v_i^\sigma\in C^\LL(v)$, for $\sigma\in\braced{\rmL,\rmR}$. Then
	\begin{itemize}
		\item if $i = m-1$, add $v_m^\LL$ and $v_m^\RR$ to $C^\LL(v)$; 
		\item otherwise, if $v_{i+2}$ is the left child of $v_{i+1}$, add $v_{i+1}^\LL$ to $C^\LL(v)$,
				and if $v_{i+2}$ is the right child of $v_{i+1}$, add $v_{i+1}^\RR$ to $C^\LL(v)$.
	\end{itemize}
\end{itemize}
We define $C^\RR(v)$ analogously to $C^\LL(v)$, but with two differences.
One, we use $\calP^\RR(v)$ instead of $\calP^\LL(v)$. Two, if $v$ is in a tentacle then
we let $C^\RR(v) = \braced{v^\RR,v^\DD}$. In other words, the whiskers form 2-cliques.

Observe that except cliques $C^\RR(v)$ corresponding to the whiskers (that is, when $v$ is in a tentacle),
all cliques $C^\sigma(v)$ have cardinality at least~$3$.


\begin{figure}[ht]
\begin{center}
\includegraphics[width = 1.75in] {./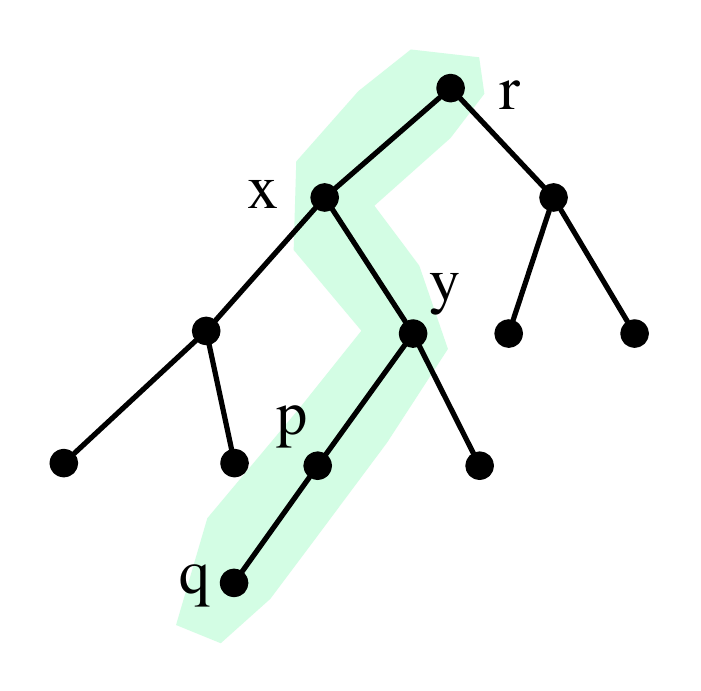}
\hskip 0.3in
\includegraphics[width = 3.9in] {./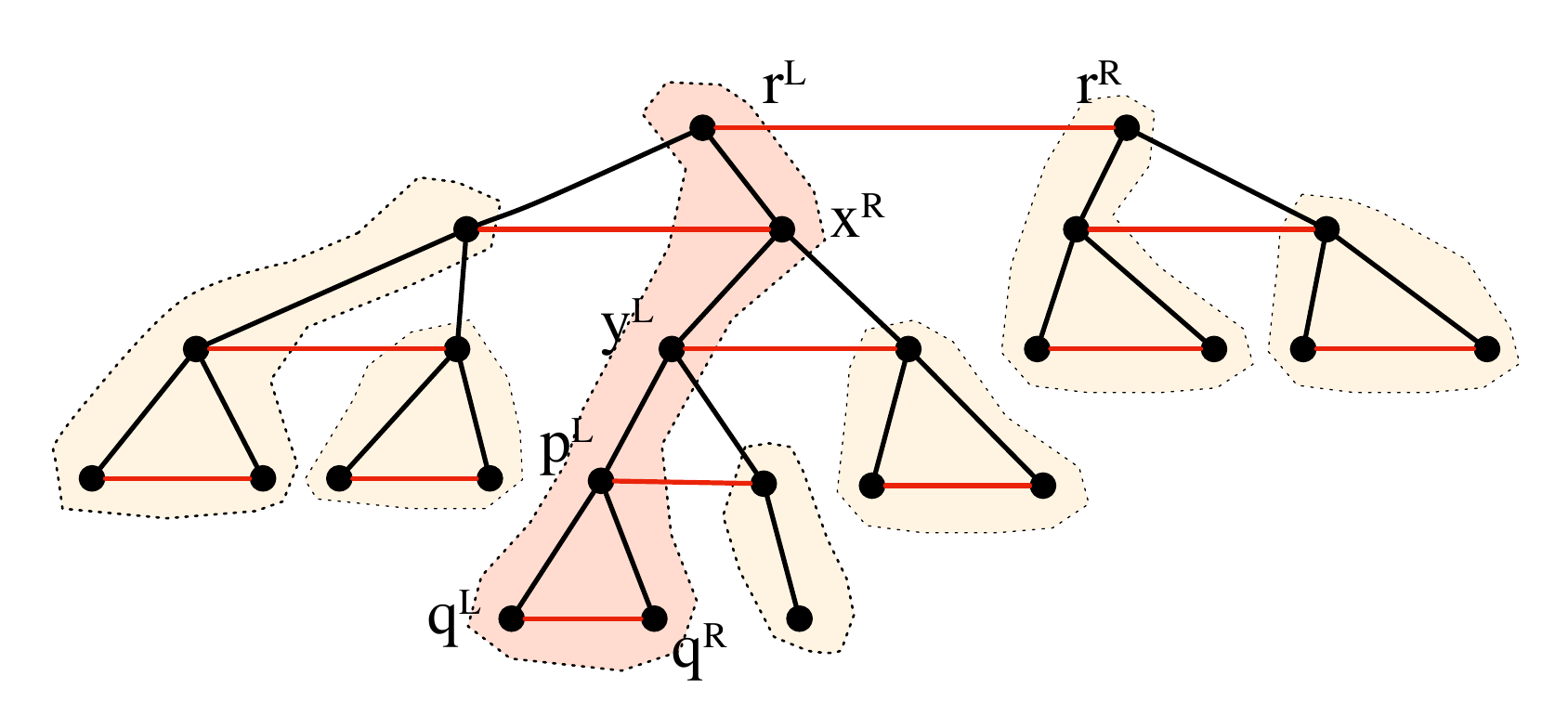}
\end{center}
\caption{On the left, an example of a path $\calP^\LL(r) = (r,x,y,p,q)$ in $\calT$. In this example, $D = 3$.
The corresponding clique $C^\LL(r)$ is shown on the right (darker shape). The figure on the right also shows
the adversary clique partitioning of $\calG$. To avoid clutter, upward edges are not shown.}
\label{fig: lower bound cliques}
\end{figure}


We now define a clique partitioning $\calC^\ast$ of $\calG$, as follows: First 
we include cliques $C^\LL(r)$ and $C^\RR(r)$ in $\calC^\ast$. 
We then proceed recursively: choose any node $v$ such that exactly one of $v^\LL, v^\RR$ is already covered
by some clique of $\calC^\ast$.
If $v^\LL$ is covered but $v^\RR$ is not, then include $C^\RR(v)$ in $\calC^\ast$.
Similarly, if  $v^\RR$ is covered but $v^\LL$ is not, then include $C^\LL(v)$ in~$\calC^\ast$.


\paragraph{Analysis.}
Denote by $\calT_v$ the subtree of $\calT$ rooted at $v$. By $\calG_v$ we denote the
subgraph of $\calG$ induced by the vertices that correspond to the nodes in $\calT_v$.
Each clique in $\calC^\ast$ that intersects $\calG_v$ induces a clique in $\calG_v$, 
and the partitioning $\calC^\ast$
induces a partitioning $\calC^\ast_v$ of $\calG_v$ into cliques.
We will use notation $\optprof_v$ for the profit of partitioning $\calC^\ast_v$.
Note that $\calC^\ast_v$ can be obtained with the same top-down process as $\calC^\ast$, but
starting from $v$ as the root instead of~$r$.

We denote strategy $\calS$'s profit (the number of cross edges) within $\calG_v$ by $\algprof_v$ .
In particular, we have $\profit_\calS(\calG) = \algprof_r$
and $\profit_{\smallOPT}(\calG) \ge \optprof_r$. 
Thus, to show~(\ref{eqn: 6-eps lower bound}), it is sufficient to prove that
\begin{equation}
	 \optprof_r \ge (6-\epsilon_D) \cdot \algprof_r,
	\label{eqn: 6-eps lower bound 2}
\end{equation}
where $\epsilon_D\to 0$ when $D\to\infty$.

We will in fact prove an analogue of inequality~(\ref{eqn: 6-eps lower bound 2})  for all
subtrees $\calT_v$. To this end, we distinguish between two types of subtrees $\calT_v$.
If $\calT_v$ ends at depth $D$ of $\calT$ or less (in other words, if $\calT_v$ is inside the core of $\calT$), 
we call $\calT_v$ \emph{shallow}.
If $\calT_v$ ends at depth $D+1$ or more, we call it \emph{deep}. So deep subtrees are those that
contain some tentacles of~$\calT$.


\begin{lemma}\label{lem: lb 6 shallow}
If $\calT_v$ is shallow, then
\begin{equation*}
	\optprof_v \ge 6 \cdot \algprof_v.	
\end{equation*}
\end{lemma}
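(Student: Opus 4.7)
The plan is to prove the inequality by induction on the height of $\calT_v$, using two clean recursions as the main tools.

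First I would unroll the top-down construction of $\calC^\ast_v$ to obtain a recursion for $\optprof_v$. For a non-leaf $w$ let $K(w) := \max(|C^\LL(w)|, |C^\RR(w)|)$, and set $K(w) := 2$ when $w$ is a leaf. I claim that for internal $v$ with children $x,y$,
\begin{equation*}
\optprof_v \;=\; \optprof_x + \optprof_y + K(x) + K(y),
\end{equation*}
and moreover $K(v) = 1 + \max(K(x), K(y)) \ge 3$. Indeed, $C^\LL(v) = \{v^\LL\} \cup C^\sigma(x)$, where $C^\sigma(x)$ is the larger of the two initial cliques in $\calC^\ast_x$ (of size $K(x)$), and symmetrically $C^\RR(v)$ enlarges the larger initial clique in $\calC^\ast_y$. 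Once these two cliques are placed, the recursive rule inside $\calT_x$ and $\calT_y$ produces exactly the remaining cliques of $\calC^\ast_x$ and $\calC^\ast_y$, since the only change from the standalone construction of $\calC^\ast_x$ is which of $x^\LL, x^\RR$ happens to sit in the extended root clique. Enlarging a clique of size $K(x)$ by one vertex adds exactly $K(x)$ to its profit, giving the extra term $K(x)+K(y)$. The leaf case is consistent: if $x$ is a leaf then $C^\LL(v)$ absorbs both $x^\LL, x^\RR$ and $\optprof_x + K(x) = 1 + 2 = 3 = \binom{3}{2}$.

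Second, $\algprof_v$ counts exactly the internal nodes of $\calT_v$, because the adversary adds children at a node $u$ precisely when $\calS$ collects its cross edge; in particular $\algprof_v = 1 + \algprof_x + \algprof_y$ for internal $v$.

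With these two recursions in hand I set $\Phi(v) := \optprof_v - 6\,\algprof_v$ and induct on the height of $\calT_v$. The base case $v$ a leaf gives $\Phi(v) = 1$. For the inductive step the recursions combine into
\begin{equation*}
\Phi(v) \;=\; \Phi(x) + \Phi(y) + K(x) + K(y) - 6,
\end{equation*}
which I would finish by a three-way case analysis on how many children of $v$ are leaves: if both are leaves then $\Phi(v) = 1+1+2+2-6 = 0$; if exactly one is a leaf then $\Phi(v) \ge 1 + 0 + 2 + 3 - 6 = 0$, using $\Phi \ge 0$ and $K \ge 3$ for the internal child; if neither is a leaf then $\Phi(v) \ge 0 + 0 + 3 + 3 - 6 = 0$. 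The main obstacle is choosing an invariant strong enough to close the induction: the desired bound $\optprof_v \ge 6\algprof_v$ is already tight on small balanced trees (for instance $v$ with two leaf children attains ratio exactly $6$), so the induction step cannot afford any slack. The surplus $\Phi(\text{leaf}) = 1$ is precisely what cancels the deficit $K(\text{leaf}) = 2 < 3$ when a leaf appears as a child, so the argument must carry both facts — the leaf surplus and the internal lower bound $K\ge 3$ — through the induction simultaneously.
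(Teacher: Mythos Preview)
Your proof is correct and follows essentially the same inductive approach as the paper: both argue by induction on the height of $\calT_v$, using that extending a root clique by $v^\LL$ (resp.\ $v^\RR$) adds at least three edges whenever the corresponding child subtree has depth at least~$1$. Your packaging via $K(\cdot)$ and $\Phi(\cdot)$ with an explicit three-way case split on how many children are leaves is in fact a bit more careful than the paper's version, which treats depth $0$ and depth $1$ as separate base cases and then, in the step for depth $\ge 2$, implicitly assumes both children are non-leaves; your case ``exactly one child is a leaf'' (where the surplus $\Phi(\text{leaf})=1$ compensates for $K(\text{leaf})=2<3$) makes that situation explicit.
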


\begin{proof}
This can be shown by induction on the depth of $\calT_v$. If this depth is $0$, that is
$\calT_v = \braced{v}$, then $\optprof_v = 1$ and $\algprof_v = 0$, so the ratio is
actually infinite. To jump-start the induction we also need 
to analyze the case when the depth of $\calT_v$ is $1$. This means that
$\calS$ collected only edge $(v^\LL,v^\RR)$ from $\calT_v$. When this happened,
the adversary generated vertices corresponding to the two children of $v$ in $\calT$
and his clustering will consist of two triangles. So now $\optprof_v = 6$ and $\algprof_v=1$, 
and the lemma holds.

Inductively, suppose that the depth of $\calT_v$ is at least two, let $y,z$ be the
left and right children of $v$ in $\calT$, and assume that
the lemma holds for $\calT_y$ and $\calT_z$.
Naturally, we have $\algprof_v = \algprof_y + \algprof_z + 1$.
Regarding the adversary profit, since the depth of $\calT_v$ is at least two, cluster
$C^\LL(v)$ contains exactly one of $y^\LL, y^\RR$; say it contains $y^\LL$.
Thus $C^\LL(v)$ is obtained from $C^\LL(y)$ by adding $v^\LL$.
By the definition of clustering $\calC^\ast$, the depth of $\calT_y$ is at least $1$,
which means that adding $v^\LL$ will add at least three new edges.
By a similar argument, we will also add at least three edges from $v^\RR$. This implies that
$\optprof_v\ge  \optprof_y + \optprof_z + 6 
		\ge 6\cdot\algprof_y + 6\cdot\algprof_z + 6 
		\ge 6\cdot\algprof_v$,
completing the inductive step.
\end{proof}

From Lemma~(\ref{lem: lb 6 shallow}) we obtain that, in particular, if $\calT$ itself is shallow then
$\optprof_r\ge 6 \cdot \algprof_r$, which is even stronger than
inequality~(\ref{eqn: 6-eps lower bound 2}) that we are in the process of justifying.
Thus, for the rest of the proof, we can restrict our attention to skeleton 
trees $\calT$ that are deep.

So next we consider deep subtrees of $\calT$.
The \emph{core depth} of a deep subtree $\calT_v$ is defined as the depth of
the part of $\calT_v$ within the core subtree of $\calT$. (In other words, the
core depth of $\calT_v$ is equal to $D$ minus the depth of $v$ in $\calT$.)
If $h$ and $s$ are, respectively, the core depth of $\calT_v$ and its
maximum tentacle length, then $0\le h\le D$ and $s\ge 1$.
The sum $h+s$ is then simply the depth of $\calT$.


\begin{lemma}\label{lem: bottom subtree ratio}
Let $\calT_v$ be a deep subtree of core depth $h\ge 0$ and maximum tentacle length $s\ge 1$, then
\begin{equation*}
		\optprof_v + 2(h+s) \ge 6\cdot \algprof_v.
\end{equation*}
\end{lemma}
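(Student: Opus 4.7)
The plan is to induct on the depth $h+s$ of the deep subtree $\calT_v$.

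\emph{Base case $h=0$:} Then $\calT_v$ is a single tentacle $v=v_0,v_1,\ldots,v_s$, because every node at depth $\geq D$ has only a left child. Unfolding the construction of $\calC^\ast_v$ shows it consists of one ``big'' clique $C^\LL(v)$ of size $s+2$ together with the $s$ whisker 2-cliques $C^\RR(v_0),\ldots,C^\RR(v_{s-1})$, giving $\optprof_v=\binom{s+2}{2}+s$. Since $\calS$ must have collected the cross edge at each $v_i$ with $i<s$ to prompt the tentacle extension, $\algprof_v=s$. The required inequality $\optprof_v+2s\ge 6s$ then collapses to $(s-1)(s-2)\ge 0$, which holds for every $s\ge 1$.

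\emph{Inductive step $h\ge 1$:} The node $v$ lies strictly inside the core, so it has two children $y,z$. I would first establish the structural identity
\begin{equation*}
\optprof_v \;=\; \optprof_y + \optprof_z + (\ell_y+2) + (\ell_z+2),
\end{equation*}
where $\ell_w$ is the depth of $\calT_w$. This identity follows from the fact that the top-down construction of $\calC^\ast_v$, when restricted to $\calG_w$, reproduces $\calC^\ast_w$ except that exactly one of its cliques --- the one whose path in $\calT$ matches the continuation of $\calP^\LL(v)$ through $y$ (resp.\ of $\calP^\RR(v)$ through $z$) --- is enlarged by adjoining $v^\LL$ (resp.\ $v^\RR$), contributing $\ell_w+2$ new edges equal to the clique's prior cardinality. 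Together with $\algprof_v=\algprof_y+\algprof_z+1$, this reduces the lemma to algebra in three subcases, distinguished by which of $\calT_y,\calT_z$ is deep.

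When both are deep, the inductive hypothesis gives $\optprof_w \ge 6\algprof_w-2(h-1+s_w)$ and $\ell_w=h-1+s_w$ for $w\in\{y,z\}$; substitution yields $\optprof_v \ge 6\algprof_v-2h-(s_y+s_z)$, which suffices because $s_y+s_z\le 2\max(s_y,s_z)=2s$. When only $\calT_y$ is deep (so $s=s_y$) and $\calT_z$ is shallow of depth $d_z\le h-1$, I would combine the inductive hypothesis on $\calT_y$ with Lemma~\ref{lem: lb 6 shallow} (which gives $\optprof_z\ge 6\algprof_z$) on $\calT_z$; the resulting algebra reduces the target inequality to $h+s+d_z\ge 1$, immediate from $h\ge 1$. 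The remaining subcase is symmetric.

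The principal obstacle is the structural identity for $\optprof_v$: one must verify carefully that the top-down greedy construction of $\calC^\ast$ behaves compatibly with the operation of ``adding $v$ as a new root'', so that exactly one clique of each $\calC^\ast_w$ is replaced by its enlargement while every other clique of $\calC^\ast_w$ reappears unchanged inside $\calC^\ast_v$. After this bookkeeping is in place, the three subcases are just mechanical algebra.
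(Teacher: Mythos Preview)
Your proposal is correct and follows essentially the same route as the paper: induction on the core depth $h$, the identical base case computation for a single tentacle, and the same case split at the inductive step according to whether each child subtree is deep or shallow, combined with Lemma~\ref{lem: lb 6 shallow} for the shallow side.

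The only (cosmetic) difference is that you package the recursion as the exact identity $\optprof_v=\optprof_y+\optprof_z+(\ell_y+2)+(\ell_z+2)$, whereas the paper writes the corresponding lower bounds directly (e.g.\ $\optprof_v\ge \optprof_y+\optprof_z+2h+s+s'+2$ in the both-deep case). Your formulation is slightly cleaner and in fact handles the corner case $d_z=0$ (shallow child is a single leaf, so its largest clique has only two vertices) more carefully than the paper's ``at least $3$ vertices'' remark; the extra slack $h+s+d_z\ge 1$ you identify is exactly what absorbs this.
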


Before proving the lemma, let us argue first that this lemma is sufficient
to establish our lower bound. Indeed, since we are now considering the case when 
$\calT$ is a deep subtree itself, the lemma implies that
$\optprof_r + 2(D+s) \ge 6\cdot \algprof_r$, where $s$ is the maximum tentacle length of $\calT$.
But $\optprof_r$ is at least quadratic in $D+s$. So for large $D$ the
ratio $\optprof_r/\algprof_r$ approaches $6$.

\medskip

\begin{proof}
To prove Lemma~\ref{lem: bottom subtree ratio}, we use
induction on $h$, the core depth of $\calT_v$. Consider first the base case, for $h=0$ (when $\calT_v$ is just a tentacle).
In his clustering  $\calC^\ast_v$,
the adversary has one clique of $s+2$ vertices, namely all $x^\LL$ vertices
in the tentacle (there are $s+1$ of these), plus one $z^\RR$ vertex for the leaf $z$.
He also has $s$ whiskers, so his profit for $\calT_v$ is
$\binom{s+2}{2} + s = \half(s^2+5s+2)$. 
$\calS$ collects only $s$ edges, namely all cross edges in $\calT_v$ except the last.
(See Figure~\ref{fig: inductive proof base}.)
Solving the quadratic inequality and using the integrality of $s$, we get 
$\optprof_v + 2s \ge 6s = 6\cdot\algprof_v$. Note that this inequality is in fact tight for $s = 1$ and~$2$.

\begin{figure}
\begin{center}
\includegraphics[width = 2.2in]{./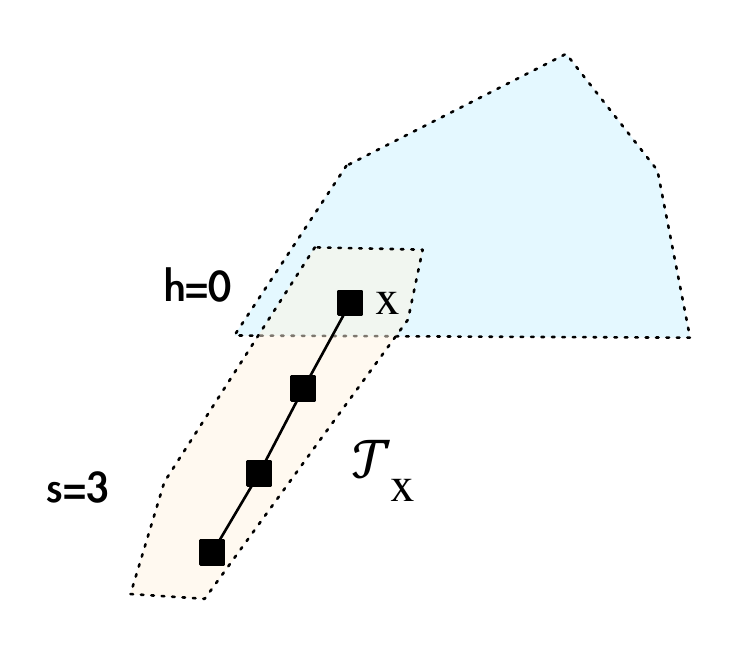}
\hskip 0.5in
\includegraphics[width = 2.2in]{./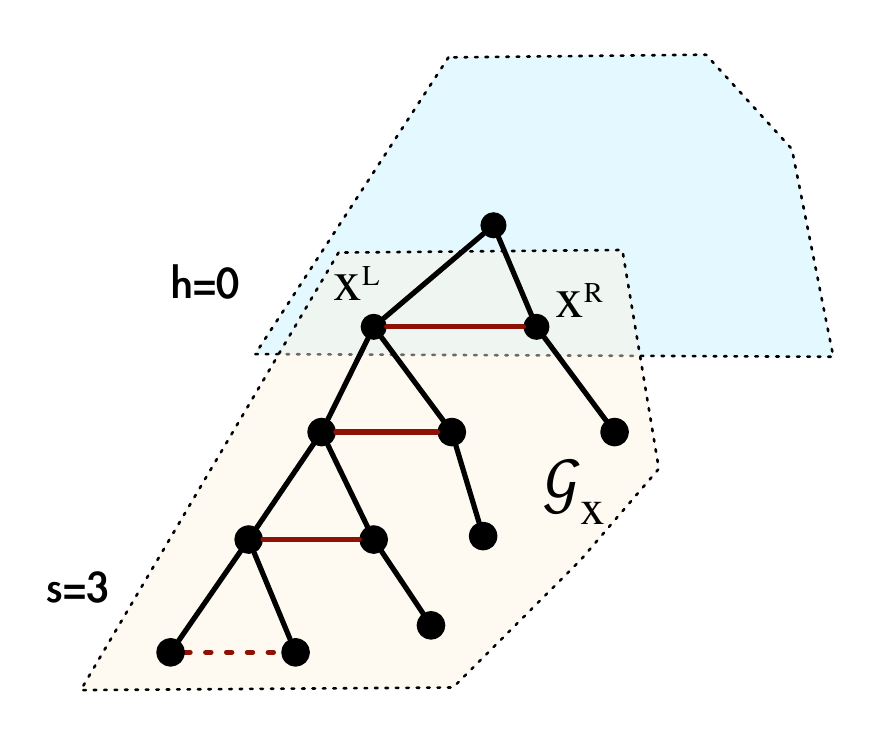}
\end{center}
\caption{Illustration of the proof of Lemma~\ref{lem: bottom subtree ratio}, the base case. 
Subtree $\calT_v$ on the left, the corresponding subgraph $\calG_v$ on the right.}
\label{fig: inductive proof base}
\end{figure}

In the inductive step, consider a deep subtree $\calT_v$. Let
$y$ and $z$ be the left and right children of $v$. Without loss of
generality, we can assume that 
$\calT_y$ is a deep tree with core depth $h-1$ and the same maximum
tentacle length $s$ as $\calT_v$, while $\calT_z$ is either shallow
(that is, it has no tentacles), or it is a deep tree with maximum
tentacle length at most $s$.

\begin{figure}
\begin{center}
	\includegraphics[width = 2in]{./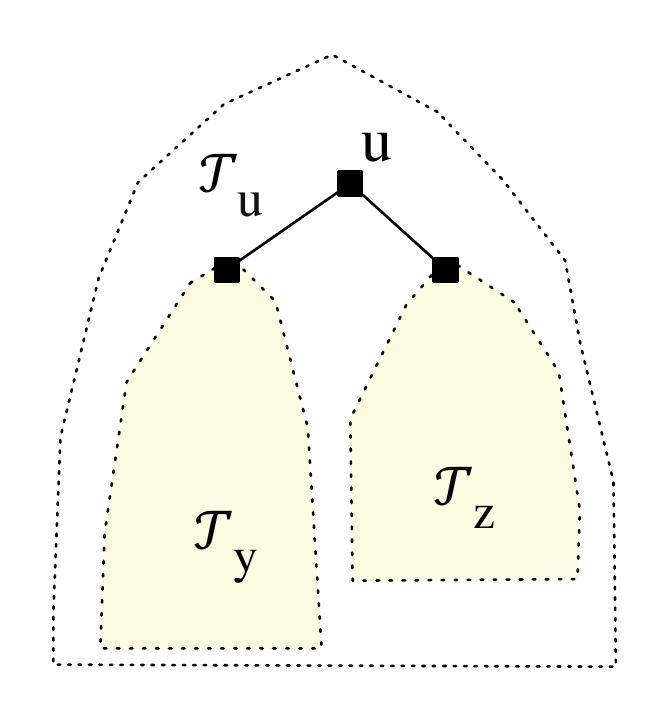}
	\hskip 0.5in
	\includegraphics[width = 2in]{./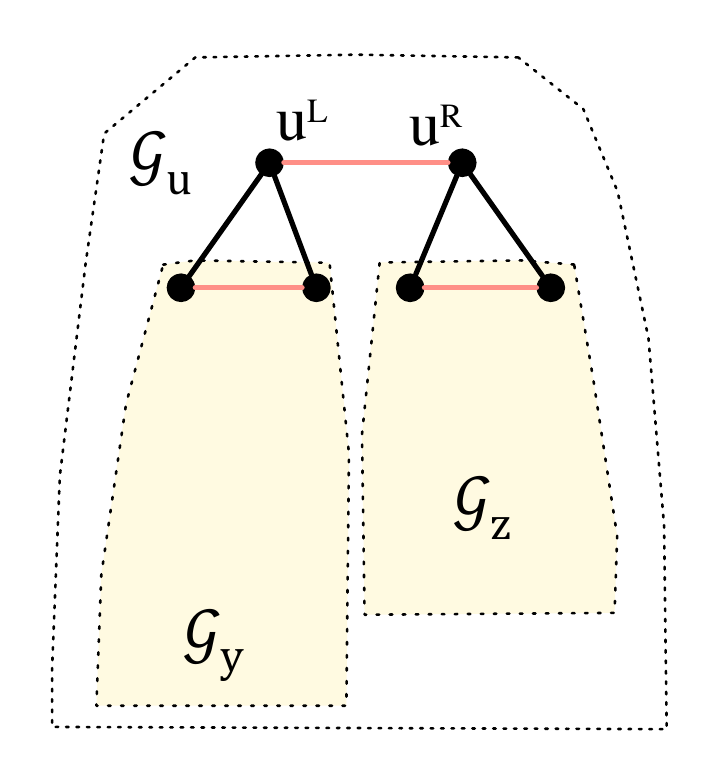}
\end{center}
\caption{Illustration of the proof of Lemma~\ref{lem: bottom subtree ratio}, the inductive step. 
Subtrees $\calT_v,\calT_y,\calT_z$ on the left, the corresponding subgraphs on the right.}
\label{fig: inductive proof ind step}
\end{figure}

By the inductive assumption, we have $\optprof_y + 2(h-1+s) \ge 6\cdot \algprof_y$.
Regarding $z$, if $\calT_z$ is shallow then from Lemma~\ref{lem: lb 6 shallow}
we get $\optprof_z  \ge 6\cdot \algprof_z$, and if $\calT_z$ is deep (necessarily of core depth $h-1$) then
$\optprof_z + 2(h-1+s') \ge 6\cdot \algprof_z$, where $s'$ is $\calT_z$'s maximum tentacle length, such that $1\le s' \le s$.

Consider first the case when $\calT_z$ is shallow. 
Note that
\begin{align*}
	\algprof_v &= \algprof_y + \algprof_z + 1
	\ \ \mbox{and} 
	\\
    \optprof_v &\ge \optprof_y + \optprof_z + h+s+4
\end{align*}
The first equation is trivial, because the profit of $\calS$ in $\calG_v$ consists of all cross edges
in $\calG_y$ and $\calG_z$, plus one more cross edge $(v^\LL,v^\RR)$. The second inequality
holds because the adversary clustering $\calC^\ast_v$ is obtained by adding
$v^\LL$ to
$\calG_y$'s cluster with $(h-1) +s+2 = h+s+1$ vertices, and $v^\RR$ can be added to
$\calG_z$'s cluster that with at least $3$ vertices.
We get
\begin{align*}
	\optprof_v + 2(h+s) &\ge [\optprof_y + \optprof_z + h+s+4] + 
										2(h+s)
						\\				
					&\ge [\optprof_y  + 2(h-1+s)   ] + \optprof_z  + 6
						\\
					&\ge 6\cdot \algprof_y	+ 6\cdot \algprof_z
							+ 6
						\\
					&= 6\cdot \algprof_v.
\end{align*}

The second case is when $\calT_z$ is a deep tree (of the same core depth $h-1$ as $\calT_y$)
with maximum tentacle length $s'$, where $1\le s'\le s$. 
As before, we have $\algprof_v = \algprof_y + \algprof_z + 1$. The optimum
profit satisfies (by a similar argument as before, applied to both $\calT_y$ and $\calT_z$)
\begin{align*}
	\optprof_v &\ge \optprof_y + \optprof_z + 2h+s+s'+2.
\end{align*}
We obtain (using $s\ge s'$)
\begin{align*}
	\optprof_v + 2(h+s) &\ge [\optprof_y + \optprof_z + 2h+s+s'+2] + 
										2(h+s)
		\\
		&\ge [\optprof_y + 2(h-1+s)] + [\optprof_z + 2(h-1+s')  ] + 6
		\\
		&\ge 6\cdot\algprof_y + 6\cdot\algprof_z + 6
		\\
		&= 6\cdot\algprof_v.
\end{align*}
This completes the proof of Lemma~\ref{lem: bottom subtree ratio}.
\end{proof}

\smallskip

We still need to explain how to extend our proof so that it also applies to
the asymptotic competitive ratio. This is quite simple: Choose some large
constant $K$. The adversary will create $K$ instances of the above game,
playing each one independently. Our construction above uses the fact that at
each step the strategy is forced to collect one of the pending cross
edges, otherwise its competitive ratio exceeds ratio $R$ (where
$R$ is arbitrarily close to $6$). Now, for $K$ sufficiently large, the strategy is
forced to collect cross edges in all except for some finite number of
copies of the game, where this number depends on the additive constant in the
competitiveness bound. 

\smallskip

\emph{Note:} Our construction is very tight, in the following sense. Suppose that
$\calS$ maintains $\calT$ as balanced as possible. Then
the ratio is exactly $6$ when the depth of $\calT$ is $1$ or $2$.
Furthermore, suppose that $D$ is very large and the strategy constructs 
$\calT$ to have depth $D$ or more, that is, it starts growing tentacles (but still
maintaining $\calT$ balanced.)
Then the ratio is $6-o(1)$ for tentacle lengths $s=1$ and $s=2$.
The intuition is that when the adversary plays optimally, he will only
allow the online strategy to collect isolated edges (cliques of size $2$).
For this reason, we conjecture that $6$ is the optimal competitive ratio.


\section{Online \MinCC\ Clustering}
\label{sec: minclustering}


In this section, we study the clique clustering problem with a different measure of optimality
that we call $\MinCC$. For $\MinCC$, we define the \emph{cost} of a clustering $\calC$ to be the
total number of {\em non-cluster edges}. Specifically, if the cliques in $\calC$ are
$C_1,C_2,...,C_k$ then the cost of $\calC$ is
$|E|-\sum_{i=1}^k\binom{|C_i|}{2}$. The objective is to construct
a clustering that minimizes this cost.


\subsection{A Lower Bound for Online {\MinCC} Clustering}\label{subsec: min lower bound}

In this section we present a lower bound for deterministic {\MinCC} clustering.

\begin{theorem}\label{thm:minlb}
\textrm{(a)}
There is no online strategy for $\MinCC$ clustering with competitive ratio $n - \omega(1)$,
where $n$ is the number of vertices.

\textrm{(b)}
There is no online strategy for $\MinCC$ clustering with absolute
competitive ratio smaller than~$n - 2$.
\end{theorem}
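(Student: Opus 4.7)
The plan is to design an adaptive adversary that reveals vertices one at a time and forces any online strategy $\calS$ into a configuration where $\cost_{\calS}\ge (n-O(1))\cdot\cost_\smallOPT$. The adversary proceeds in two phases.

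First, in a \emph{testing} phase, the adversary reveals disjoint edges $(v_1,v_2),(v_3,v_4),\ldots$, one pair at a time, inspecting $\calS$'s clustering after each pair. As long as $\calS$ leaves every pair split, each revealed test edge contributes $1$ to $\cost_\calS$ while $\cost_\smallOPT=0$ (each pair forms its own OPT $2$-clique). So an $R$-competitive strategy with additive constant $\beta$ must merge some pair within roughly the first $\beta+1$ test pairs, else $\cost_\calS>\beta=R\cdot 0+\beta$. For (b), $\beta=0$ forces the merge already at step~$2$.

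Once $\calS$ merges some pair $(v_{2j-1},v_{2j})$---after at most $k\le\beta+1$ test pairs have been revealed---the \emph{trap} phase begins. The adversary releases the remaining $n-2k$ vertices $v_{2k+1},\ldots,v_n$ with edges chosen so that $\{v_{2j},v_{2k+1},\ldots,v_n\}$ forms a clique, while $v_{2j-1}$ retains only its original edge to $v_{2j}$ and the other test pairs are untouched.

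An offline clustering that takes each non-trap test pair as a $2$-clique, isolates $v_{2j-1}$, and clusters $\{v_{2j},v_{2k+1},\ldots,v_n\}$ leaves only the edge $(v_{2j-1},v_{2j})$ outside clusters, so $\cost_\smallOPT=1$. In contrast, $\calS$ is stuck with $\{v_{2j-1},v_{2j}\}$ and cannot merge it with any new vertex, since $v_{2j-1}$ has no further neighbors; its cheapest completion puts $\{v_{2k+1},\ldots,v_n\}$ in a separate cluster (and freely merges the remaining test pairs at no cost), leaving the $n-2k$ edges from $v_{2j}$ to $\{v_{2k+1},\ldots,v_n\}$ between clusters. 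Hence $\cost_\calS\ge n-2k$.

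For (b), $k=1$ yields $\cost_\calS\ge n-2=(n-2)\cost_\smallOPT$, proving the absolute lower bound. For (a), $k\le\beta+1$ gives $\cost_\calS\ge n-2(\beta+1)$; combining this with the competitive inequality $\cost_\calS\le(n-f(n))+\beta$ forces $f(n)\le 3\beta+2=O(1)$, contradicting any $f(n)=\omega(1)$. The main obstacle is the asymptotic case: regardless of how $\calS$ distributes or postpones its merges---possibly merging several test pairs in the same step---the adversary must still be able to pick a committed pair and spring the trap on it, with the remaining merged pairs providing no shortcut, which relies on the adversary deliberately adding no edges between old test-pair vertices and the trap clique.
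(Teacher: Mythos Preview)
Your proof is correct and follows essentially the same approach as the paper: release $O(\beta)$ disjoint test edges to force the strategy to commit to some pair $\{v_{2j-1},v_{2j}\}$, then attach a large clique to $v_{2j}$ (and nothing else) so that $\cost_\smallOPT=1$ while $\cost_\calS\ge n-O(\beta)$. The only cosmetic difference is that the paper releases all $\beta+1$ test pairs up front and then inspects, whereas you stop the testing phase adaptively at the first merge; both yield the same $f(n)\le 3\beta+2$ bound.
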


\begin{proof}
(a)
Consider a strategy $\calS$ with competitive ratio $R_n = n-\omega(1)$.
Thus, according to the definition~(\ref{eqn: min competitive ratio}) of the competitive
ratio, there is a constant $\beta$ that satisfies
$\cost_{\calS}(G) \le R_n \cdot\cost_\smallOPT(G) + \beta$, where $n = |G|$.
We can assume that $\beta$ is a positive integer.

\begin{figure}
\begin{center}
\includegraphics{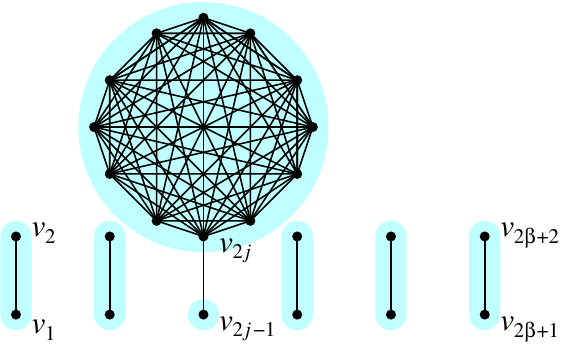}
\caption{\label{fig: mincc lower bound}Illustrating the lower bound proof of Theorem~\protect\ref{thm:minlb}. The figure shows the optimal clustering for the graph.}
\end{center}
\end{figure}

The adversary first produces a graph of
$2\beta+2$ vertices connected by $\beta+1$ disjoint edges $(v_{2i-1},v_{2i})$,
for $i = 1,2,\ldots,\beta+1$. At this point,
$\calS$ must have added at least one pair $\braced{v_{2j-1},v_{2j}}$ to its clustering, because
otherwise, since $\cost_\smallOPT(G)=0$, inequality~(\ref{eqn: min competitive ratio}) would be
violated.
The adversary then chooses some large $n$ and adds $n-2\beta-2$
new vertices $v_{2\beta+3},\ldots,v_{n}$ that together with $v_{2j}$ form a clique of size
$n - 2\beta-1$; see Figure~\ref{fig: mincc lower bound}.
All edges from $v_{2j}$ to these new vertices are non-cluster edges for $\calS$
and the optimum solution has only one non-cluster edge $(v_{2j-1},v_{2j})$.
Thus
\begin{equation*}
\cost_{\calS}(G) - \beta \ge (n-2\beta-2) - \beta
						= n-3\beta - 2
						= (n-3\beta-2) \cdot \cost_\smallOPT(G)
						> R_n\cdot\cost_\smallOPT(G),
\end{equation*}
if $n$ is large enough, giving us a contradiction.

\smallskip
(b) The proof of this part is a straightforward modification of the
proof for (a): the adversary starts by releasing just one edge $(v_1,v_2)$,
and the online strategy is forced to cluster $v_1$ and $v_2$ together,
because now $\beta = 0$. Then the adversary forms a clique of size
$n-1$ including $v_2$. The details are left to the reader.
\end{proof}


\subsection{The Greedy Strategy for Online {\MinCC} Clustering}\label{subsec:mingreedy}

We continue the study of online {\MinCC} clustering, and we prove that {\Greedy},
the greedy strategy presented in Section~\ref{subsec: greedy_algorithm},
yields a competitive ratio matching the lower bound from the previous section.


\begin{theorem}\label{thm:mincc-upper}
The absolute competitive ratio of {\Greedy} is $n-2$.
\end{theorem}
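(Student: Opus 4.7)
The plan is to prove $\cost_{\Greedy}(G) \le (n-2) \cdot \cost_{\smallOPT}(G)$ for every $n$-vertex graph~$G$; combined with the lower bound~$n-2$ of Theorem~\ref{thm:minlb}(b), this pins the absolute ratio at exactly~$n-2$. I first dispose of boundary cases: for $n \le 3$, {\Greedy} is already optimal (Figure~\ref{fig:greedysmall}), and when $\cost_{\smallOPT}(G) = 0$ the graph~$G$ is a disjoint union of cliques, for which an easy induction on arrival order shows that {\Greedy} also has cost~$0$ (a newly-arrived vertex~$v$ merges into the unique cluster formed by the previously-arrived vertices of its clique of~$G$). Henceforth assume $n \ge 4$ and $\cost_{\smallOPT}(G) \ge 1$.

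The structural core of the argument is the following \emph{Key Lemma}: at every step, for every optimum cluster~$C^*$, at most one current {\Greedy} cluster is contained in~$C^*$. The lemma is proved by induction on arrival times: a new $\subseteq C^*$ cluster can only arise through a new singleton~$\{v\}$ with~$v\in C^*$, but any preexisting $\subseteq C^*$ cluster~$Q$ is a valid candidate for~$v$ to merge into (since $C^*$ is a clique in~$G$), so {\Greedy} cannot leave $v$ as a singleton once such a $Q$ exists.

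Using the Key Lemma I define a charging scheme sending each {\Greedy} cross edge one unit of charge to some Opt cross edge. \textbf{(i)} If a {\Greedy} cross edge $(a,b)$ is also Opt cross, charge it to itself. \textbf{(ii)} Otherwise $a,b$ share an Opt cluster~$C^*$; letting $Q_i,Q_j$ be the {\Greedy} clusters of $a,b$ and $P_i=Q_i\cap C^*,P_j=Q_j\cap C^*$, the Key Lemma guarantees a vertex~$w$ outside~$C^*$ in at least one of $Q_i,Q_j$ (say $w\in Q_j\setminus C^*$); charge $(a,b)$ to $(w,b)$, which is an edge of~$G$ (as $w,b\in Q_j$, a clique) and is Opt cross ($w\notin C^*,b\in C^*$). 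A tie-break by piece index resolves the case when both $Q_i,Q_j$ supply witnesses.

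Finally I bound the charge on any Opt cross edge~$(w,c)$ by~$n-2$. A {\Greedy}-cross~$(w,c)$ receives only its self-charge~$1$, since scheme~(ii) always outputs pairs sharing a {\Greedy} cluster. A {\Greedy}-internal~$(w,c)$ with~$w\in C^*_w,c\in C^*_c$ can only receive scheme-(ii) charges from the local schemes of~$C^*_c$ and~$C^*_w$. A direct count: the charges to~$(w,c)$ from the $C^*_c$-scheme originate from {\Greedy} cross edges~$(a,c)$ inside $C^*_c$ with $a\in C^*_c\setminus P_k$ (where $k$ is the {\Greedy} piece of~$c$), numbering at most $|C^*_c|-|P_k|\le|C^*_c|-1$; symmetrically, the $C^*_w$-scheme contributes at most~$|C^*_w|-1$. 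Summing, the charge to~$(w,c)$ is at most $|C^*_c|+|C^*_w|-2\le n-2$, using disjointness of the Opt clusters. Summing over all Opt cross edges yields $\cost_{\Greedy}(G)\le(n-2)\cdot\cost_{\smallOPT}(G)$. The main obstacle is the Key Lemma's proof together with the careful case analysis in the charging rule so that no {\Greedy} cross edge is double-counted across the two local schemes; the tight instance constructed in the proof of Theorem~\ref{thm:minlb}(b) simultaneously saturates every inequality in this chain, serving as a sanity check.
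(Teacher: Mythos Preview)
Your proof is correct and takes a genuinely different route from the paper's. The paper's argument is edge-local: it first shows (Claim~A) that every {\Greedy} non-cluster edge $(u,v)$ has some {\OPT} non-cluster edge incident to $u$ or $v$, and then uses a \emph{fractional} charging scheme in which a {\Greedy} non-cluster edge lying inside an {\OPT} cluster splits its unit of charge evenly among all {\OPT} non-cluster edges touching $u$ or $v$. Bounding the charge received by a fixed {\OPT} non-cluster edge $(x,y)$ then requires a case analysis on the sets $P = N(x)\cap N(y)$ and $Q = (N(x)\cup N(y))\setminus(P\cup\{x,y\})$, together with the inequality $|P|+|Q|\le n-2$.

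Your approach replaces this with a global structural fact---at most one {\Greedy} cluster can lie entirely inside any fixed {\OPT} cluster $C^*$---which immediately furnishes, for every {\Greedy} cross edge inside $C^*$, an \emph{integer} witness charge to a single {\OPT} cross edge within the same {\Greedy} cluster. The receiving edge $(w,c)$ then collects at most $|C^*_c|-1$ charges via the endpoint $c$ and at most $|C^*_w|-1$ via $w$, and disjointness of the two {\OPT} clusters gives $|C^*_c|+|C^*_w|-2\le n-2$ with no fractional bookkeeping and no neighbour-set case split. Your Key Lemma is the cleaner structural insight; it also makes the extension to arbitrary ``non-procrastinating'' strategies (which the paper notes after its proof) completely transparent, since any strategy that never leaves two merge-able clusters automatically satisfies it.
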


\begin{proof}
The key observation for this proof is that, for any triplet of vertices $u$, $v$, and $v'$,
if the graph contains the two edges $(u,v)$ and $(u,v')$ but $v$ and $v'$ are not connected by an edge,
then in any clustering at least one of the edges $(u,v)$ or $(u,v')$ is a non-cluster edge.


\emmedparagraph{Claim~A:}
Let $(u,v)$ be a non-cluster edge of {\Greedy}. Then {\OPT}
(the optimal clustering) has at least one non-cluster
edge adjacent to $u$ or $v$ (which might also be $(u,v)$ itself).

\medskip
Without loss of generality suppose vertex $v$ arrives after vertex $u$.
Let $A$ be the cluster of {\Greedy} containing vertex $u$ at the moment when
vertex $v$ arrives. We have that $v\notin A$.
If $A$ contains some vertex $u'$ not connected to $v$,
then the earlier key observation shows that one
of the edges $(u',u)$, $(u,v)$ is a non-cluster edge for {\OPT}; see Figure~\ref{fig:minCC}.

\begin{figure}
    \centerline{\includegraphics[width=8cm]{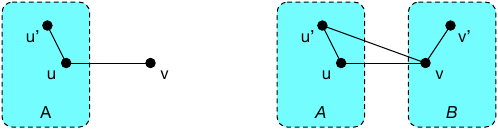}}
    \caption{Illustration of the proof of Theorem~\ref{thm:mincc-upper}.}
    \label{fig:minCC}
\end{figure}

Now assume that $v$ is connected to all vertices of $A$. {\Greedy} had an option of
adding $v$ to $A$ and it didn't, so it placed $v$ in some clique $B$ (of size at least $2$)
that is not merge-able with $A$, that is,
there are vertices $u'\in A$ and $v'\in B$ which are not connected by an edge.
Now the earlier key observation shows that
one of the edges $(u',v)$, $(v,v')$ is a non-cluster edge of {\OPT}. This completes
the proof of Claim~A.

\smallskip

To estimate the number of non-cluster edges of {\Greedy}, we use a charging scheme.
Let $(u,v)$ be a non-cluster edge of {\Greedy}. We charge it to non-cluster edges
of {\OPT} as follows.
\begin{description}
\item{\emph{Self charge:}}
If $(u,v)$ is a non-cluster edge of {\OPT}, we charge $1$ to $(u,v)$ itself.
\item{\emph{Proximate charge:}}
If $(u,v)$ is a cluster edge in {\OPT}, we split the charge of $1$ from $(u,v)$
evenly among all non-cluster edges of {\OPT} incident to $u$ or $v$.
\end{description}
From Claim~A, the charging scheme is well-defined, that is,
all non-cluster edges of {\Greedy} have been charged fully to non-cluster
edges of {\OPT}.
It remains to estimate the total charge that any non-cluster edge of {\OPT} may have received.
Since the absolute competitive ratio is the ratio between the number of non-cluster edges
of {\Greedy} and the number of non-cluster edges of {\OPT}, the maximum charge to any
non-cluster edge of {\OPT} is an upper bound for the absolute competitive ratio.

Consider a non-cluster edge $(x,y)$ of {\OPT}. Edge $(x,y)$ can receive charges only from
itself (self charge) and other edges incident to $x$ or $y$ (proximate charges). Let
$P$ be the set of vertices adjacent to both $x$ and $y$, and let
$Q$ be the set of vertices that are adjacent to only one of them, but excluding $x$ and $y$:
\begin{equation*}
	P = N(x) \cap N(y)
		\quad\textrm{and}\quad
	Q = N(x)\cup N(y) - P  - \braced{x,y}.
\end{equation*}
($N(z)$ denotes the neighborhood of a vertex $z$, the set of vertices adjacent to $z$.)
We have $|P|+|Q|\leq n-2$.

Edges connecting $x$ or $y$ to $Q$ will be called $Q$-edges. Trivially,
the total charge from $Q$-edges to $(x,y)$ is at most $|Q|$.

Edges connecting $x$ or $y$ to $P$ will be called $P$-edges. Consider
some $z\in P$. Since $x$ and $y$ are in different clusters of {\OPT},
at least one of $P$-edges $(x,z)$ or $(y,z)$ must also be a
non-cluster edge for {\OPT}. By symmetry, assume that $(x,z)$ is a non-cluster
edge for {\OPT}. If $(x,z)$ is a non-cluster edge of {\Greedy} then
$(x,z)$ will absorb its self charge. So $(x,z)$ will not contribute to
the charge of $(x,y)$. If $(y,z)$ is a non-cluster edge of {\Greedy}
then either it will be self charged (if it's also a non-cluster edge of {\OPT})
or its proximate charge will be split between at least two edges,
namely $(x,y)$ and $(x,z)$. Thus the charge from
$(y,z)$ to $(x,y)$ will be at most $\half$.
Therefore the total charge from $P$-edges to $(x,y)$ is at most $\half |P|$.
We now have some cases.


\smallskip
\noindent
\mycase{1}
$(x,y)$ is a cluster edge of {\Greedy}. Then $(x,y)$ does not generate a self charge,
so the total charge received by $(x,y)$ is at most
$\half |P| + |Q| \le |P| + |Q| \le n-2$.


\smallskip
\noindent
\mycase{2}
$(x,y)$ is a non-cluster edge of {\Greedy}. Then $(x,y)$ contributes a
self charge to itself.

\begin{description}

	\item{\mycase{2.1}}
 		$|P|\ge 2$. Then $\half |P|\le |P|-1$, so the total charge received
		by $(x,y)$ is at most
		$\half |P| + |Q| +1 \le (|P|-1) +|Q|
						 	= |P|+|Q|
							\le n-2$.

	\item{\mycase{2.2}}
		At least one $Q$-edge is a cluster edge of {\Greedy}.
		Then the total proximate charge from $Q$-edges is
		at most $|Q|-1$, so the total
		charge received by $(x,y)$ is at most
		$\half |P| + (|Q|-1) + 1 \le  |P| + |Q|
								\le n-2$.

	\item{\mycase{2.3}}
		$|P| \in\braced{0,1}$ and all $Q$-edges are non-cluster edges of {\Greedy}.
		We claim that this case cannot actually occur. Indeed,
		if $|P| = 0$ then {\Greedy} would cluster $x$ and $y$ together.
		Similarly, if $P = \braced{z}$,
		then {\Greedy} would cluster $x$, $y$ and $z$ together.
		In both cases, we get a contradiction with the assumption of Case~2.
\end{description}

Summarizing, we have shown that each non-cluster edge of
{\OPT} receives a total charge of at most $n-2$, and
the theorem follows.
\end{proof}

The proof of Theorem~\ref{thm:mincc-upper} applies in fact to a more general
class of strategies, giving an
upper bound of $n-2$ on the absolute competitive ratio of
all ``non-procrastinating'' strategies, which
never leave merge-able clusters in their clusterings
(that is clusters $C$, $C'$ such that $C\cup C'$ forms a clique).


\bibliographystyle{plain}
\bibliography{clique_clustering}

\end{document}